\documentclass[9pt,a4paper]{article}
\usepackage{epsf, graphicx}
\usepackage{latexsym,amsfonts,amsbsy,amssymb}
\usepackage{amsmath,amsthm}
\usepackage{cite}
\usepackage{cleveref}
\usepackage{indentfirst}

\textwidth=15cm \textheight=22cm \topmargin 0 cm \oddsidemargin 0in
 \evensidemargin 0in \baselineskip= 12pt
\parindent=12pt
\parskip=3pt
\overfullrule=0pt
\bibliographystyle{plain}
\makeatletter

\@addtoreset{equation}{section} \makeatother
\newtheorem{theorem}{Theorem}[section]
\newtheorem{lemma}{Lemma}[section]
\newtheorem{corollary}{Corollary}[section]
\newtheorem{remark}{Remark}[section]
\newtheorem{example}{Example}[section]
\newtheorem{proposition}{Proposition}[section]

\setlength{\parindent}{0.9cm}
\setlength{\parskip}{3ptplus1ptminus2pt}
\setlength{\baselineskip}{12pt plus2pt minus1pt}
\setlength{\topmargin}{2.5 cm} \setlength{\headheight}{0cm}
\setlength{\textheight}{26.5 cm} \setlength{\textwidth}{15 cm}
\makeatletter \@addtoreset{equation}{section} \makeatother
\textheight=24cm \textwidth=16cm
\parskip = 0.5cm

\topmargin=1cm \oddsidemargin=0cm \evensidemargin=0cm
\textwidth=15cm \textheight=22cm \topmargin 0 cm \oddsidemargin 0in
 \evensidemargin 0in \baselineskip= 12pt
\parindent=12pt
\parskip=3pt
\overfullrule=0pt
\bibliographystyle{plain}
\linespread{1.12}


\begin{document}

\title{ $\sigma$-self-orthogonal constacyclic codes of length $p^s$ over $\mathbb F_{p^m}+u\mathbb F_{p^m}$ \footnote{
 E-Mail addresses: hwliu@mail.ccnu.edu.cn (H. Liu), jinggeliu@mails.ccnu.edu.cn (J. Liu).}}

\author{Hongwei Liu,~Jingge Liu}

\date{\small
School of Mathematics and Statistics, Central China Normal University,Wuhan, Hubei, 430079, China\\
}
\maketitle


%

\section*{Abstract}
 \addcontentsline{toc}{section}{\protect Abstract} 
 \setcounter{equation}{0} 

In this paper, we study the $\sigma$-self-orthogonality of constacyclic codes of length $p^s$ over the finite commutative chain ring $\mathbb F_{p^m} + u \mathbb F_{p^m}$, where $u^2=0$ and $\sigma$ is a ring automorphism of $\mathbb F_{p^m} + u \mathbb F_{p^m}$.  First, we obtain the structure of $\sigma$-dual code of a $\lambda$-constacyclic code of length $p^s$ over $\mathbb F_{p^m} + u \mathbb F_{p^m}$. Then, the necessary and sufficient conditions for a $\lambda$-constacyclic code to be $\sigma$-self-orthogonal are provided.  In particular, we determine the $\sigma$-self-dual constacyclic codes of length $p^s$ over $\mathbb F_{p^m} + u \mathbb F_{p^m}$.  Finally, we extend the results to constacyclic codes of length $2 p^s$.

\medskip
\noindent{\large\bf Keywords: }\medskip   constacyclic code; repeated-root code; $\sigma$-self-orthogonal code; $\sigma$-self-dual code;  finite commutative chain ring.

\medskip
2010 {\it Mathematics Subject Classification.} \, 94A55, 94B05

\section{Introduction}
The study of constacyclic codes originated in the 1960s.
Berlekamp\cite{r2,r3} introduced the concept of negacyclic codes over finite fields. Constacyclic codes are the natural generalization of cyclic codes which can be technically implemented by shift registers. They have similar algebraic structure to cyclic codes so that they inherit most of the good properties of cyclic codes. The properties of constacyclic codes are easy to analyze so that they can easily be encoded and decoded. Thus, this family of codes is interesting for both theoretical and practical reasons.

Codes over finite rings have received much attention recently after it was proved that some important families of binary non-linear codes are in fact images under a Gray map of linear codes over $\mathbb{Z}_4$ (see, for example, \cite{r4,r5,r6}). If the characteristic of the finite ring is relatively prime to the length of a constacyclic code, we call this code a \emph{simple-root code}; otherwise it is called a \emph{repeated-root code}. Dinh and L\'{o}pez-Permouth\cite{r7} obtained the structure of simple-root cyclic and negacyclic codes of length $n$ and their duals over a finite chain ring and gave necessary and sufficient conditions for the existence of a simple-root cyclic self-dual code over a finite chain ring. Since the decomposition of polynomials over finite rings is not unique, the structure of repeated-root constacyclic codes over finite rings is more complex.
Since 2003, some special classes of repeated-root constacyclic codes over certain finite chain rings have been studied by many authors (see, for example,\cite{b25,b27,r8,r9,r10,r11}).

In 1997, Bachoc\cite{r12} discussed linear codes over $\mathbb F_q+u\mathbb F_q$($\,q=p$ or $p^2\, $, $\,p$ is a prime). This work has aroused the interest of researchers in studying error correcting codes over finite chain rings of the form $\mathbb F_{p^m}+u\mathbb F_{p^m}\, $($\,u^2=0,\,p $ is a prime).
Dinh\cite{r13} studied all constacyclic codes of length $2^s$ over $\mathbb F_{2^m}+u\mathbb F_{2^m}$. The algebraic structure of all constacyclic codes of length $p^s$ and $2p^s$ over the finite commutative chain ring $\mathbb F_{p^m}+u\mathbb F_{p^m}$ was determined in \cite{r14,r15}.

Self-orthogonal codes over finite rings or finite fields are a class of important linear codes which are closely related to combinatorial designs and modular lattices. It has been found that the problem of finding quantum error-correcting codes can be transformed into the problem of finding additive codes over $\mathbb F_{4}$ that are self-orthogonal with respect to a certain trace inner product \cite{a17}. This has caused a great interest in constructing classical self-orthogonal codes. Self-dual codes are a special class of self-orthogonal codes. A self-dual code has the same weight distribution as its dual code. A large number of good codes are self-dual codes. So they have been an important subject in the research of error-correcting codes. As far as we know, there have been very few results concerning self-orthogonal constacyclic codes over $\mathbb F_{p^m}+u\mathbb F_{p^m}$. Recently, all self-dual constacyclic codes of length $p^s$ over the finite commutative chain ring $\mathbb F_{p^m}+u\mathbb F_{p^m}$ and the number of each type of self-dual constacyclic code were established in \cite{r16}. But it is not easy to obtain the the self-orthogonality from self-duality.

Let $R$ be a finite commutative Frobenius ring with an identity and $Aut(R)$ be the ring automorphism group of $R$. Let $\sigma\in Aut(R)$, then $\sigma$ can be extended to a bijective map
\begin{displaymath}
\begin{aligned}
R^n &\longrightarrow R^n,\\
(\,r_0,\,r_1,\,\cdots ,\,r_{n-1}\,)& \longmapsto (\,\sigma(r_0),\,\sigma(r_1),\,\cdots ,\,\sigma(r_{n-1})\,).
\end{aligned}
\end{displaymath}

Given $n$-tuples $\textbf{x}=(\,x_0, \,x_1, \,\cdots , \,x_{n-1}),\, \textbf{y}=(\,y_0,\, y_1,\, \cdots ,\, y_{n-1}\,)\in R^n$, their \emph{$\sigma$-inner product} is defined as
$$\langle \mathbf{x},\mathbf{y}\rangle _{\sigma}=\sum _{i=0}^{n-1}x_i\sigma(y_i)=x_0\sigma(y_0)+x_1\sigma(y_1)+\cdots+x_{n-1}\sigma(y_{n-1}).$$

When $R$ is a finite field $\mathbb F_{p^m}$ of order $p^m$, $\sigma$-inner product is just the usual Euclidean inner product if $\sigma$ is the identity map of $\mathbb F_{p^m}$, $\sigma$-inner product is the Hermitian inner product if $m$ is even and $\sigma$ maps any element $a$ of $\mathbb F_{p^m}$ to $a^{p^{\frac{m}{2}}}$ and $\sigma$-inner product is the Galois inner product \cite{c16} if $\sigma$ maps any element $a$ of $\mathbb F_{p^m}$ to $a^{p^{h}}$ for some integer $0\leq h\leq m-1$.

$\sigma$-inner product over finite commutative Frobenius rings generalizes the Euclidean inner product, the Hermitian inner product and Galois inner product over finite fields.
Two $n$-tuples $\textbf{x}$ and $\textbf{y}$ are called \emph{$\sigma$-orthogonal} if $\langle \mathbf{x},\mathbf{y}\rangle _{\sigma}=0$.
For a code $C$ over $R$, its \emph{$\sigma$-dual code} $C^{\perp_\sigma}$ is defined as
\begin{displaymath}
C^{\perp_\sigma} =\left \{\textbf{x}\,|\langle \mathbf{c},\mathbf{x}\rangle _{\sigma}=0 ,\, \forall \,\textbf{c}\in C\right \}.
\end{displaymath}
A code $C$ is called \emph{$\sigma$-self-orthogonal} if $C\subseteq C^{\perp_\sigma}$, and it is called \emph{$\sigma$-self-dual} if $C=C^{\perp_\sigma}$.

This paper focuses on the $\sigma$-self-orthogonality of constacyclic codes of length $p^s$ over the finite commutative chain ring $\mathbb F_{p^m}+u\mathbb F_{p^m}$.

The remainder of this paper is organized as follows. Preliminary concepts and some known results are given in Section $2$. In Section $3$, we obtain the structure of $\sigma$-dual codes of $\lambda$-constacyclic codes of length $p^s$ over $\mathbb F_{p^m}+u\mathbb F_{p^m}$.
In Section $4$, we provide necessary and sufficient conditions for a $\lambda$-constacyclic code to be $\sigma$-self-orthogonal using the relation between the polynomials of the generating sets of a $\lambda$-constacyclic codes and its $\sigma$-dual code. In particular, we obtain the $\sigma$-self-dual constacyclic codes over $\mathbb F_{p^m}+u\mathbb F_{p^m}$. The results in Section $4$ can be extended to constacyclic codes of length $2p^s$ over $\mathbb F_{p^m}+u\mathbb F_{p^m}$.

\section{Preliminaries}

Let $R$ be a finite commutative Frobenius ring with an identity. We call a nonempty subset $C$ of $R^n$ a \emph{code} of length $n$ over $R$ and the ring $R$ is referred to as the \emph{alphabet} of $C$. If $C$ is an $R$-submodule of $R^n$, then $C$ is said to be \emph{linear}. It is easy to obtain the following proposition.

\begin{proposition}\label{pro1}
Let $C$ be a code of length $n$ over $R$, then

(1) $C^{\perp_\sigma}$ is a linear code over $R$.

(2) $C^{\perp_\sigma}=\sigma^{-1}(C^{\perp})$. Moreover, if $C$ is a linear code, then $\left |C \right |\left | C^{\perp_\sigma} \right |=\left | R \right |^n $.
\end{proposition}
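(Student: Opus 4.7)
The proof of Proposition \ref{pro1} is essentially a bookkeeping exercise, so my plan is to dispatch the two parts separately, using that $\sigma$ is a ring automorphism of the commutative Frobenius ring $R$ and that the componentwise extension of $\sigma$ to $R^n$ is a bijection.

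For part (1), I would take arbitrary $\mathbf{x},\mathbf{y}\in C^{\perp_\sigma}$ and $r\in R$, and verify closure under addition and scalar multiplication directly from the definition of $\langle\cdot,\cdot\rangle_\sigma$. Since $\sigma$ is additive, $\langle\mathbf{c},\mathbf{x}+\mathbf{y}\rangle_\sigma=\langle\mathbf{c},\mathbf{x}\rangle_\sigma+\langle\mathbf{c},\mathbf{y}\rangle_\sigma=0$ for every $\mathbf{c}\in C$; and since $\sigma$ is multiplicative and $R$ is commutative, $\langle\mathbf{c},r\mathbf{x}\rangle_\sigma=\sigma(r)\langle\mathbf{c},\mathbf{x}\rangle_\sigma=0$. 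Hence $C^{\perp_\sigma}$ is an $R$-submodule of $R^n$.

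For part (2), the identity $C^{\perp_\sigma}=\sigma^{-1}(C^\perp)$ follows from unwinding the definitions: $\mathbf{x}\in C^{\perp_\sigma}$ means $\sum_{i} c_i\sigma(x_i)=0$ for all $\mathbf{c}\in C$, which is precisely $\langle \mathbf{c},\sigma(\mathbf{x})\rangle=0$ for all $\mathbf{c}\in C$, i.e.\ $\sigma(\mathbf{x})\in C^\perp$. For the cardinality statement, I would use that the componentwise extension of $\sigma$ is a bijection on $R^n$, so $|C^{\perp_\sigma}|=|\sigma^{-1}(C^\perp)|=|C^\perp|$. Then invoking the standard Frobenius-ring MacWilliams identity $|C|\,|C^\perp|=|R|^n$ for any linear code $C$ over a finite commutative Frobenius ring yields the desired equality $|C|\,|C^{\perp_\sigma}|=|R|^n$.

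Essentially no step is an obstacle; the only nontrivial ingredient is the cardinality formula $|C|\,|C^\perp|=|R|^n$ for linear codes over Frobenius rings, which I would simply cite as a known result rather than reprove. Everything else is immediate from the ring-automorphism properties of $\sigma$.
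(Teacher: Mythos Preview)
Your proposal is correct and follows essentially the same route as the paper. The only cosmetic difference is that the paper checks linearity in one stroke by computing $\langle \mathbf{x}, r_1\mathbf{y}_1+r_2\mathbf{y}_2\rangle_\sigma$, whereas you separate addition and scalar multiplication; part~(2) is identical in both arguments.
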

\begin{proof}
(1) For any $r_1,r_2\in R,~\mathbf{y}_i=(y_{i0},y_{i1},\ldots ,y_{i,n-1}) \in C^{\perp_\sigma}, i=1,2,$ and
$\mathbf{x}=(x_0,x_1,\ldots ,x_{n-1})\in C,$ we have
\begin{displaymath}
\begin{split}
\langle \mathbf{x},r_1\mathbf{y}_1+r_2\mathbf{y}_2\rangle _{\sigma}&=\sum _{i=0}^{n-1}x_i\sigma(r_1y_{1i}+r_2y_{2i})=\sigma(r_1)\sum _{i=0}^{n-1}x_i \sigma(y_{1i})+\sigma(r_2)\sum _{i=0}^{n-1}x_i \sigma(y_{2i})\\
 &=\sigma(r_1)\langle \mathbf{x},\mathbf{y}_1\rangle _{\sigma}+\sigma(r_2)\langle \mathbf{x},\mathbf{y}_2\rangle _{\sigma}=0.
\end{split}
\end{displaymath}
Thus, $r_1\mathbf{y}_1+r_2\mathbf{y}_2\in C^{\perp_\sigma}$, which means that $C^{\perp_\sigma}$ is a linear code over $R$.

(2) For any $\mathbf{y}\in R^n$, we have $\mathbf{y}\in C^{\perp_\sigma}$ if and only if $\langle \mathbf{c},\mathbf{y}\rangle _{\sigma}=\langle \mathbf{c},\sigma(\mathbf{y})\rangle=0 ,\, \forall \,\textbf{c}\in C$ if and only if $\sigma(\mathbf{y})\in C^{\perp}$ if and only if $\mathbf{y}\in \sigma^{-1}(C^{\perp})$, implying that $C^{\perp_\sigma}=\sigma^{-1}(C^{\perp})$, where $\langle -,-\rangle$ is the usual Euclidean inner product.

Since $\sigma$ can be extended to a bijective map from $R^n$ to $R^n$, $|C^{\perp_\sigma}|=|\sigma^{-1}(C^{\perp})|=|C^{\perp}|.$ If $C$ is  linear, then $\left |C \right |\left | C^\perp \right |=\left | R \right |^n $. Hence, $\left |C \right |\left | C^{\perp_\sigma} \right |=\left | R \right |^n $.
\end{proof}

For a unit $\lambda$ of $R$, the $\lambda\, $-constacyclic ($\lambda\,$-twisted) shift $\tau _\lambda$ on $R^n$ is the shift
\begin{displaymath}
\tau_\lambda (\,x_0,\, x_1,\, \cdots , \,x_{n-1})=(\,\lambda x_{n-1},\, x_0,\, x_1, \,\cdots ,\, x_{n-2}).
\end{displaymath}
A linear code $C$ is said to be \emph{$\lambda\, $-constacyclic} if $\tau_ \lambda (C)=C$.
The $1$-constacyclic codes are the \emph{cyclic codes} and the $-1$-constacyclic codes are just the \emph{negacyclic codes}.

Let $f(x)$ be a polynomial over $R$ and let $\deg f(x)$ denote the degree of $f(x)$.
Under the standard $R$-module isomorphism
\begin{displaymath}
\begin{aligned}
R^n &\longrightarrow R[x]/\left \langle  x^n-\lambda\right \rangle, \\
(\,c_0,\,c_1,\,\cdots ,\,c_{n-1}\,)& \longmapsto  c_0+c_1x+\cdots+c_{n-1}x^{n-1}+\left \langle  x^n-\lambda\right \rangle,
\end{aligned}
\end{displaymath}
each codeword $\textbf{c}=(\,c_0,\,c_1,\,\cdots ,\,c_{n-1}\,)$ can be  identified with its polynomial representation
$$c(x)=c_0+c_1x+\cdots+c_{n-1}x^{n-1}\in R\left [ x \right ],\, \deg\,c(x) \leqslant n-1,$$
and each $\lambda$-constacyclic code $C$ of length $n$ over $R$ can also be viewed as an ideal of the quotient ring $R[x]/\left \langle  x^n-\lambda\right \rangle.$ In the light of this, the study of $\lambda$-constacyclic codes of length $n$ over $R$ is equivalent to the study of ideals of the quotient ring $R[x]/\left \langle  x^n-\lambda\right \rangle.$
It is easy to prove the following proposition by Proposition \ref{pro1}.
\begin{proposition}\label{pro2}
The $\sigma$-dual code of a $\lambda$-constacyclic code is a $\sigma^{-1}(\lambda^{-1})$-constacyclic code.
\end{proposition}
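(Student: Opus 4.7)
The plan is to show that for any $\mathbf{y} \in C^{\perp_\sigma}$, the shifted tuple $\tau_{\sigma^{-1}(\lambda^{-1})}(\mathbf{y})$ remains $\sigma$-orthogonal to every codeword of $C$. Write $\mu := \sigma^{-1}(\lambda^{-1})$, so that $\sigma(\mu) = \lambda^{-1}$. The identification of this particular unit is the one genuinely delicate point: the $\sigma$ inside the inner product turns the leading constant $\mu$ into $\sigma(\mu)$, so to match the $\lambda$ coming from the $\lambda$-constacyclic shift on $C$ we must take $\sigma(\mu) = \lambda^{-1}$, i.e.\ $\mu = \sigma^{-1}(\lambda^{-1})$.

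The heart of the argument is the single identity
$$\langle \mathbf{x},\, \tau_\mu(\mathbf{y}) \rangle_\sigma \;=\; \langle \tau_\lambda^{-1}(\mathbf{x}),\, \mathbf{y} \rangle_\sigma \qquad \text{for all } \mathbf{x} \in R^n,\ \mathbf{y} \in R^n.$$
To verify it I would simply expand both sides. On the left, using $\tau_\mu(\mathbf{y}) = (\mu y_{n-1}, y_0, y_1, \ldots, y_{n-2})$ and pulling $\sigma(\mu) = \lambda^{-1}$ out of the first term, one obtains $\lambda^{-1} x_0 \sigma(y_{n-1}) + \sum_{i=1}^{n-1} x_i \sigma(y_{i-1})$. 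On the right, since $\tau_\lambda^{-1}(\mathbf{x}) = (x_1, x_2, \ldots, x_{n-1}, \lambda^{-1} x_0)$, direct expansion yields the same expression.

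With the identity in hand the conclusion is immediate. Since $C$ is $\lambda$-constacyclic and $\tau_\lambda$ is a bijection of $R^n$, we have $\tau_\lambda^{-1}(C) = C$, so $\tau_\lambda^{-1}(\mathbf{x}) \in C$ whenever $\mathbf{x} \in C$. Then $\mathbf{y} \in C^{\perp_\sigma}$ forces $\langle \tau_\lambda^{-1}(\mathbf{x}), \mathbf{y}\rangle_\sigma = 0$, and the identity transports this to $\langle \mathbf{x}, \tau_\mu(\mathbf{y})\rangle_\sigma = 0$ for every $\mathbf{x} \in C$. Thus $\tau_\mu(\mathbf{y}) \in C^{\perp_\sigma}$, giving $\tau_\mu(C^{\perp_\sigma}) \subseteq C^{\perp_\sigma}$. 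Equality then follows because $\tau_\mu$ is a bijection of $R^n$ and $C^{\perp_\sigma}$ is finite, so inclusion on a finite set forces equality. There is no real obstacle beyond the indexing bookkeeping and correctly pinpointing the shift constant as $\sigma^{-1}(\lambda^{-1})$; the proof is essentially the standard "dual of a constacyclic code is constacyclic" computation, modified only by carrying the automorphism $\sigma$ through the inner product.
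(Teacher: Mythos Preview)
Your proof is correct. The identity $\langle \mathbf{x},\tau_\mu(\mathbf{y})\rangle_\sigma=\langle \tau_\lambda^{-1}(\mathbf{x}),\mathbf{y}\rangle_\sigma$ checks out exactly as you computed, and the finiteness argument for upgrading $\tau_\mu(C^{\perp_\sigma})\subseteq C^{\perp_\sigma}$ to equality is fine. One small omission: the definition of ``$\mu$-constacyclic'' requires linearity, which you do not mention; but this is Proposition~\ref{pro1}(1) and is trivial.

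Your route differs from the paper's. The paper simply invokes Proposition~\ref{pro1}(2), namely $C^{\perp_\sigma}=\sigma^{-1}(C^{\perp})$, together with the classical fact that the Euclidean dual of a $\lambda$-constacyclic code is $\lambda^{-1}$-constacyclic; applying the coordinatewise map $\sigma^{-1}$ to a $\lambda^{-1}$-constacyclic code visibly produces a $\sigma^{-1}(\lambda^{-1})$-constacyclic code, since $\sigma^{-1}\circ\tau_{\lambda^{-1}}=\tau_{\sigma^{-1}(\lambda^{-1})}\circ\sigma^{-1}$. This is a two-line reduction to the Euclidean case. Your approach, by contrast, is self-contained: it redoes the classical shift computation directly with the $\sigma$-inner product, never passing through $C^{\perp}$. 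The paper's method is shorter and more modular; yours has the advantage of making explicit exactly where the constant $\sigma^{-1}(\lambda^{-1})$ arises, without relying on the Euclidean result as a black box.
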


Let $f(x)=a_0+a_1x+\cdots+a_rx^r\in R\left [ x \right ]$, where $a_r\neq 0$.
Then the polynomial $f^\ast (x)=a_r+a_{r-1}x+a_{r-2}x^2+\cdots+a_0x^r$ is called the \emph{reciprocal polynomial} of $f(x)$.
In fact, $f^\ast (x)$ can also be expressed as $f^\ast (x)=x^rf(\frac{1}{x})$. We can see that if $I$ is an ideal of $R[x]/\left \langle  x^n-\lambda\right \rangle$, then $I^\ast=\left \{  f^\ast(x)|f(x)\in I\right \}$ is an ideal of $R[x]/\left \langle  x^n-\lambda^{-1}\right \rangle$.

Let $I$ be an ideal of $R[x]/\left \langle  x^n-\lambda\right \rangle$. The \emph{annihilator} of $I$ denoted by $\mathcal{A}(I)$ is defined as
\begin{displaymath}
\mathcal{A}(I)=\left \{ g(x)\in R[x]/\left \langle  x^n-\lambda\right \rangle \,|\,f(x)g(x)=0,\, \forall \,f(x)\in I \right \}.
\end{displaymath}
Then $\mathcal{A}(I)$ is also an ideal of $R[x]/\left \langle  x^n-\lambda\right \rangle$. It is clear that if $C$ is a $\lambda$-constacyclic code of length $n$ over $R$, then $C^{\perp}$ is $\mathcal{A}(C)^\ast$ and $C^{\perp_\sigma}$ is $\sigma^{-1}(\mathcal{A}(C)^\ast)$.

Throughout this paper, let $p$ be an odd prime and $s$ be a positive integer. $\mathbb F_{p^m}$ denotes the finite field of order $p^m$, where $m$ is a positive integer. $\mathbb F_{p^m}^\ast $ denotes the multiplicative cyclic group of non-zero elements of $\mathbb F_{p^m}$.
Let $R=\mathbb F_{p^m}+u\mathbb F_{p^m}$, where $u^2=0$.
Then $R$ is a finite commutative chain ring with the unique maximal ideal $\langle u\rangle$, whose ideals are $\langle u^0\rangle=R$, $\langle u\rangle$ and $\langle u^2\rangle=0$. Each element of $R$ can be expressed as $a+ub$, where $a,b\in\mathbb F_{p^m}$. Then element $a+ub$ is a unit of $R$ if and only if $a\neq0$. If $a\neq0$, then $a+ub$ is a square of $R$ if and only if $a$ is a square of $\mathbb F_{p^m}$. The automorphism group of $R$ is given as follows.

\begin{proposition}\label{pro3}(\cite{d20})
For $\theta\in Aut(\mathbb F_{p^m})$ and $\varepsilon \in \mathbb F_{p^m}^\ast$, let
\begin{displaymath}
\begin{aligned}
\Theta_{\theta,\varepsilon} :~~\,R&\longrightarrow \,R, \\
a+ub & \longmapsto  \, \theta(a)+\varepsilon \theta(b).
\end{aligned}
\end{displaymath}
Then $Aut(R)=\{\Theta_{\theta,\varepsilon}~|~\theta\in Aut(\mathbb F_{p^m}),~\varepsilon \in \mathbb F_{p^m}^\ast\}$.
\end{proposition}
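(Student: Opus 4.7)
The plan is to prove both inclusions. The easy direction is to verify that each $\Theta_{\theta,\varepsilon}$ really is a ring automorphism of $R$: additivity is inherited directly from $\theta$, multiplicativity reduces to a short computation that uses $u^2=0$ to kill the $(u\varepsilon)^2$ term arising from $(a+ub)(c+ud) = ac + u(ad+bc)$, and bijectivity is immediate because $\theta$ is a bijection of $\mathbb{F}_{p^m}$ and $\varepsilon \in \mathbb{F}_{p^m}^*$.

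For the reverse inclusion, I would take an arbitrary $\Theta \in \mathrm{Aut}(R)$ and proceed in three steps. First, show that the coefficient field $\mathbb{F}_{p^m}$ is stable under $\Theta$ by exhibiting it as the fixed set of the intrinsic map $z \mapsto z^{p^m}$ on $R$. For $z = x + uy$, the binomial expansion of $z^{p^m}$ collapses to $x^{p^m}$ because $p \mid \binom{p^m}{k}$ for $1 \leq k < p^m$ and $(uy)^k = 0$ for $k \geq 2$. Then $z^{p^m} = z$ rearranges to $x^{p^m} - x = uy$, which, since the left side is in $\mathbb{F}_{p^m}$ and the right side is in $u\mathbb{F}_{p^m}$, forces $y = 0$. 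Thus the fixed set equals $\mathbb{F}_{p^m}$, and since $\Theta$ commutes with $z \mapsto z^{p^m}$, it preserves $\mathbb{F}_{p^m}$; the restriction $\theta := \Theta|_{\mathbb{F}_{p^m}}$ lies in $\mathrm{Aut}(\mathbb{F}_{p^m})$.

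Second, because $\langle u \rangle$ is the unique maximal ideal of the chain ring $R$, $\Theta$ sends $\langle u\rangle$ to itself; and every element of $\langle u\rangle$ has the form $u\alpha$ with $\alpha \in \mathbb{F}_{p^m}$ (the $u\cdot u\beta$ part vanishes). Bijectivity rules out $\Theta(u)=0$, so $\Theta(u) = u\varepsilon$ for some $\varepsilon \in \mathbb{F}_{p^m}^*$. Third, for a general element $a+ub \in R$, additivity and multiplicativity give $\Theta(a+ub) = \theta(a) + u\varepsilon\,\theta(b)$, identifying $\Theta$ with $\Theta_{\theta,\varepsilon}$.

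The main subtlety is the first step: the subfield $\mathbb{F}_{p^m} \subset R$ is not singled out by elementary ring axioms, so one needs the Frobenius-fixed-point characterization to see that every automorphism must preserve it. Once that is in hand, the determination of $\Theta(u)$ is a quick consequence of the local chain-ring structure, and the final identification is just bookkeeping using $u^2=0$.
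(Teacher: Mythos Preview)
Your argument is correct and complete. Note, however, that the paper does not actually supply a proof of this proposition: it is stated with a citation to Alkhamees \cite{d20} and used as a known result, so there is no in-paper argument to compare against. Your Frobenius-fixed-point characterization of $\mathbb{F}_{p^m}$ inside $R$ is a clean way to see that any automorphism must preserve the coefficient field; once that is established, the rest follows from the local chain-ring structure exactly as you describe. (Incidentally, your write-up silently corrects what appears to be a typographical slip in the displayed formula: the map should read $a+ub \mapsto \theta(a) + u\varepsilon\,\theta(b)$, with the $u$ present, as is clear from how $\sigma^{-1}=\Theta_{\theta,\varepsilon}$ is used later in the paper, e.g.\ in Theorem~\ref{th1}.)
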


In the rest of this paper, let $\sigma \in Aut(R)$. Then $\sigma^{-1} \in Aut(R)$, i.e., $\sigma^{-1}=\Theta_{\theta,\varepsilon}$ for some $\theta\in Aut(\mathbb F_{p^m})$ and $\varepsilon \in \mathbb F_{p^m}^\ast$.

For a code $C$ of length $n$ over $R$, its \emph{torsion} and \emph{residue codes} are defined as follows.
\begin{displaymath}
Tor(C)=\left \{ \textbf{b} \in \mathbb F_{p^m}^n\,|\,u\textbf{b}\in C\right \}, \,\,Res(C)=\left \{ \textbf{a} \in \mathbb F_{p^m}^n\,|\,\exists ~ \textbf{b} \in \mathbb F_{p^m}^n \,~such~\, that~ \,\textbf{a}+u\textbf{b}\in C\right \}.
\end{displaymath}
Then both of them are codes of length $n$ over $\mathbb F_{p^m}$. The reduction modulo $u$ from $C$ to $Res(C)$ is defined as
\begin{displaymath}
\begin{aligned}
\phi :~~\,C&\longrightarrow \,Res(C), \\
\textbf{a}+u\textbf{b} & \longmapsto  \, \textbf{a}.
\end{aligned}
\end{displaymath}
Clearly, $\phi$ is well defined and surjective, with $\ker(\phi )=~uTor(C)$, $\phi (C)=Res(C)$. Therefore, $|C|=|Res(C)|\cdot|Tor(C)|$.

\section{The structure of $\sigma$-dual codes of $\lambda$-constacyclic codes}

The algebraic structure of all $\lambda$-constacyclic codes of length $p^s$ over $\mathbb F_{p^m}+u\mathbb F_{p^m}$ was obtained in \cite{r14}.
The situation of $\lambda$ is divided into two cases separately: (a) $\lambda =\alpha +u\beta$, where $\alpha \, $, $\beta $ are nonzero elements of $\mathbb F_{p^m}$, (b) $\lambda =\gamma $, where $\gamma\, $ is a nonzero element of $\mathbb F_{p^m}$.

First, we consider the case that $\lambda =\alpha +u\beta$, where $\alpha \, $, $\beta $ are nonzero elements of $\mathbb F_{p^m}$.

Let $\mathcal{R}_{\alpha , \beta } =\frac{R[x]}{\left \langle x^{p^s}-(\alpha +u\beta ) \right \rangle}$, then the $(\alpha +u\beta)$-constacyclic codes of length $p^s$ over $R$ are ideals of the ring $\mathcal{R}_{\alpha , \beta }$.
By the division algorithm, there exist nonnegative integers $\alpha _q$, $\alpha _r$ such that $s=\alpha _qm+\alpha _r$, and $0\leqslant \alpha _r\leqslant m-1$.
Let $\alpha _0=\alpha ^{-p^{(\alpha _q+1)m-s}}$. Then $\alpha _0^{p^s}=\alpha ^{-1}$. We have the following conclusions:

\begin{lemma}[\cite{r14}]\label{le1}
In $\mathcal{R}_{\alpha , \beta }$, $\left \langle (\alpha _0x-1)^{p^s} \right \rangle=\left \langle u \right \rangle$.  In particular, $\alpha _0x-1$ is nilpotent in $\mathcal{R}_{\alpha , \beta }$ with nilpotency index $2p^s$. $\mathcal{R}_{\alpha , \beta }$ is a chain ring with ideals that are precisely
\begin{displaymath}
\mathcal{R}_{\alpha , \beta }=\left \langle 1 \right \rangle\supsetneqq\left \langle \alpha _0x-1 \right \rangle\supsetneqq \cdots \supsetneqq\left \langle (\alpha_0x-1)^{2p^s-1} \right \rangle\supsetneqq\left \langle (\alpha_0x-1)^{2p^s} \right \rangle=\left \langle 0 \right \rangle.
\end{displaymath}
$(\alpha +u\beta)$-constacyclic codes of length $p^s$ over $R$ are the ideals $\left \langle (\alpha_0x-1)^i\right \rangle$, $0\leqslant i\leqslant 2p^s\,, $ of the chain  ring $\mathcal{R}_{\alpha , \beta }$. Each code $\left \langle (\alpha_0x-1)^i\right \rangle$ contains $p^{m(2p^s-i)}$ codewords.
\end{lemma}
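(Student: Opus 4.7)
The plan is to reduce everything to one central computation: the identity $(\alpha_0 x - 1)^{p^s} = u\beta\alpha^{-1}$ in $\mathcal{R}_{\alpha,\beta}$. First I would exploit that $R$ has characteristic $p$, so Frobenius gives $(\alpha_0 x - 1)^{p^s} = \alpha_0^{p^s} x^{p^s} - 1$; combining $\alpha_0^{p^s} = \alpha^{-1}$ (by definition of $\alpha_0$) with $x^{p^s} = \alpha + u\beta$ (the defining relation of the quotient) collapses this to $u\beta\alpha^{-1}$. Since $\beta\alpha^{-1}$ is a unit in $R$, this shows $\langle (\alpha_0 x - 1)^{p^s}\rangle = \langle u\rangle$ at once.

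From here, $u^2 = 0$ immediately gives $(\alpha_0 x - 1)^{2p^s} = 0$. To confirm the nilpotency index is exactly $2p^s$, I would expand $(\alpha_0 x - 1)^{2p^s - 1} = u\beta\alpha^{-1}(\alpha_0 x - 1)^{p^s - 1}$; binomially expanding the right-hand factor yields a polynomial in $x$ of degree $p^s - 1$ with leading coefficient the unit $\alpha_0^{p^s - 1}$, and since polynomial representatives of degree $< p^s$ are unique modulo $x^{p^s} - (\alpha+u\beta)$, this is nonzero in $\mathcal{R}_{\alpha,\beta}$.

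For the chain-ring structure, I would first check that $\mathcal{R}_{\alpha,\beta}$ is local with maximal ideal $\langle \alpha_0 x - 1\rangle$: substituting $x = \alpha_0^{-1}$ in $\mathcal{R}_{\alpha,\beta}/\langle \alpha_0 x - 1\rangle$ forces $\alpha = \alpha + u\beta$, hence $u = 0$, so the residue ring is $\mathbb{F}_{p^m}$. Together with the nilpotency of $\alpha_0 x - 1$, this forces every nonzero element to be associate to a power $(\alpha_0 x - 1)^i$ with $0 \le i < 2p^s$, yielding precisely the claimed list of ideals. The cardinality count then follows from the fact that each successive quotient $\langle (\alpha_0 x - 1)^i\rangle/\langle (\alpha_0 x - 1)^{i+1}\rangle$ is one-dimensional over the residue field $\mathbb{F}_{p^m}$, so $|\langle (\alpha_0 x - 1)^i\rangle| = p^{m(2p^s - i)}$.

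The main obstacle I foresee is the rigorous exhaustion step showing that every ideal is indeed a power of $\langle \alpha_0 x - 1\rangle$. Rather than appealing to general structure theorems for commutative chain rings, the cleanest route is to reduce modulo $u$: via the Frobenius identity $x^{p^s} - \alpha = \alpha(\alpha_0 x - 1)^{p^s}$ in $\mathbb{F}_{p^m}[x]$, the quotient $\mathcal{R}_{\alpha,\beta}/\langle u\rangle \cong \mathbb{F}_{p^m}[x]/\langle (\alpha_0 x - 1)^{p^s}\rangle$ is transparently a chain ring with $p^s + 1$ ideals; pulling these back through the filtration $\langle 0 \rangle \subsetneq \langle u\rangle = \langle (\alpha_0 x - 1)^{p^s}\rangle \subsetneq \mathcal{R}_{\alpha,\beta}$, and using the fact that $\langle u\rangle$ is itself generated over the residue ring by a single element, assembles the full chain of $2p^s + 1$ ideals.
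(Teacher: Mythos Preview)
The paper does not prove this lemma; it is quoted verbatim from \cite{r14} without argument. Your proposal is correct and is essentially the proof given in the original reference: the Frobenius computation $(\alpha_0 x-1)^{p^s}=\alpha^{-1}(\alpha+u\beta)-1=u\beta\alpha^{-1}$ is exactly Dinh's key step, and the chain-ring conclusion via a local ring with principal nilpotent maximal ideal is the standard continuation.

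One small remark: your final paragraph about reducing modulo $u$ to exhaust the ideals is unnecessary extra work. Once you have established that $\mathcal{R}_{\alpha,\beta}$ is local with maximal ideal $\langle \alpha_0 x-1\rangle$ (which you did, since the quotient is $\mathbb{F}_{p^m}$) and that $\alpha_0 x-1$ is nilpotent, the exhaustion is immediate: any nonunit lies in $\langle \alpha_0 x-1\rangle$, so factors as $(\alpha_0 x-1)g$; iterating terminates by nilpotency, so every nonzero element is a unit times some $(\alpha_0 x-1)^i$. No detour through $\mathcal{R}_{\alpha,\beta}/\langle u\rangle$ is needed.
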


\begin{theorem}\label{th1}
Let $C$ be an $(\alpha +u\beta)$-constacyclic code of length $p^s$ over $R$ and $C=\left \langle (\alpha_0x-1)^i\right \rangle\subseteq \mathcal{R}_{\alpha , \beta }$ for some $0\leq i\geq 2p^s$, then its $\sigma$-dual code is the $(\theta(\alpha^{-1})-u\varepsilon\theta(\beta \alpha^{-2}))$-constacyclic code $$C^{\perp_\sigma}=\left \langle \big(\theta(\alpha_0^{-1})x-1\big)^{2p^s-i} \right \rangle \subseteq  \mathcal{R}_{\theta(\alpha^{-1}) , -\varepsilon\theta(\beta \alpha^{-2})},$$ which contains $p^{mi}$ codewords.
\end{theorem}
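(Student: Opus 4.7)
The plan is to combine the identity $C^{\perp_\sigma}=\sigma^{-1}(\mathcal{A}(C)^\ast)$ noted at the end of Section~2 with the chain-ring structure of $\mathcal{R}_{\alpha,\beta}$ supplied by Lemma~\ref{le1}, and then keep careful track of which quotient ring hosts each intermediate object. The proof factors into three transports: $C\rightsquigarrow\mathcal{A}(C)$ inside $\mathcal{R}_{\alpha,\beta}$, then reciprocation into $\mathcal{R}_{\alpha^{-1},-\beta\alpha^{-2}}$, then applying $\sigma^{-1}$ into $\mathcal{R}_{\theta(\alpha^{-1}),-\varepsilon\theta(\beta\alpha^{-2})}$.

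First I would compute $\mathcal{A}(C)$. Since $\mathcal{R}_{\alpha,\beta}$ is a finite chain ring whose maximal ideal is generated by $\pi=\alpha_0 x-1$ with nilpotency index $N=2p^s$, the standard chain-ring fact $\mathrm{Ann}(\langle\pi^i\rangle)=\langle\pi^{N-i}\rangle$ yields $\mathcal{A}(C)=\langle(\alpha_0 x-1)^{2p^s-i}\rangle$: the containment $\supseteq$ is immediate from $\pi^N=0$, and the reverse uses that every nonzero element of the chain ring is a unit multiple of some $\pi^j$.

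Next I would pass to the reciprocal. Directly from the definition, $(\alpha_0 x-1)^\ast=\alpha_0-x$, and because $(\alpha_0 x-1)^{2p^s-i}$ has both its leading and constant coefficients units, multiplicativity of $(\cdot)^\ast$ gives
$$\mathcal{A}(C)^\ast=\langle(\alpha_0-x)^{2p^s-i}\rangle\subseteq\mathcal{R}_{\alpha^{-1},-\beta\alpha^{-2}},$$
the ambient ring changing because $(\alpha+u\beta)^{-1}=\alpha^{-1}-u\beta\alpha^{-2}$. Then I would apply $\sigma^{-1}=\Theta_{\theta,\varepsilon}$ coefficient-wise: since $\sigma^{-1}$ extends to a ring automorphism of $R[x]$ fixing $x$, it descends to a ring isomorphism $\mathcal{R}_{\alpha^{-1},-\beta\alpha^{-2}}\to\mathcal{R}_{\theta(\alpha^{-1}),-\varepsilon\theta(\beta\alpha^{-2})}$ that sends ideals to ideals, giving
$$C^{\perp_\sigma}=\langle(\theta(\alpha_0)-x)^{2p^s-i}\rangle\subseteq\mathcal{R}_{\theta(\alpha^{-1}),-\varepsilon\theta(\beta\alpha^{-2})}.$$
To put this in the form stated, I would invoke the identity $\theta(\alpha_0^{-1})x-1=-\theta(\alpha_0^{-1})(\theta(\alpha_0)-x)$; since $-\theta(\alpha_0^{-1})$ is a unit, raising to the $(2p^s-i)$-th power preserves the generated ideal. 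The consistency of the ambient ring with Proposition~\ref{pro2} serves as a sanity check.

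Finally, the cardinality $p^{mi}$ drops out of $|C|\cdot|C^{\perp_\sigma}|=|R|^{p^s}=p^{2mp^s}$ from Proposition~\ref{pro1} together with $|C|=p^{m(2p^s-i)}$ from Lemma~\ref{le1}. I expect the main obstacle to be less any one computation than the bookkeeping across the three distinct quotient rings — in particular, verifying that $\theta(\alpha_0^{-1})$ is genuinely the analogue of $\alpha_0$ in the target ring, which reduces to the identity $\theta(\alpha_0^{-1})^{p^s}=\theta(\alpha)=(\theta(\alpha^{-1}))^{-1}$ so that Lemma~\ref{le1} may legitimately be applied there.
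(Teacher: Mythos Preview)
Your argument is correct and follows essentially the same route as the paper: the paper's one-line proof cites Proposition~\ref{pro1} (i.e., $C^{\perp_\sigma}=\sigma^{-1}(C^{\perp})$) together with Theorem~4.3 of \cite{r14} for the Euclidean dual, whereas you unpack that citation by computing $C^\perp=\mathcal{A}(C)^\ast$ directly from the chain-ring structure of Lemma~\ref{le1} before applying $\sigma^{-1}$. The extra bookkeeping you flag (tracking the ambient quotient rings and checking $\theta(\alpha_0^{-1})^{p^s}=\theta(\alpha^{-1})^{-1}$) is exactly what the cited result packages, so your proposal is a faithful expansion rather than a different proof.
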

\begin{proof}
It follows immediately from Proposition \ref{pro1} and Theorem 4.3 of \cite{r14}.
\end{proof}

In the following, we discuss the case that $\lambda =\gamma $, where $\gamma\, $ is a nonzero element of $\mathbb F_{p^m}$.

Let $\mathcal{R}_\gamma=\frac{R[x]}{\left \langle x^{p^s}-\gamma \right \rangle}$, then the $\gamma$-constacyclic codes of length $p^s$ over $R$ are ideals of the ring $\mathcal{R}_\gamma$.
By the division algorithm, there exist nonnegative integers $\gamma _q$, $\gamma _r$ such that $s=\gamma _qm+\gamma _r$, and $0\leqslant \gamma _r\leqslant m-1$.
Let $\gamma _0=\gamma ^{-p^{(\gamma _q+1)m-s}}=\gamma ^{-p^{m-\gamma _r}}$. Then $\gamma _0^{p^s}=\gamma ^{-p^{(\gamma _q+1)m}}=\gamma ^{-1}$.

In \cite{r14} the authors studied $\gamma$-constacyclic codes by constructing a one-to-one correspondence between cyclic and $\gamma$-constacyclic codes as follows.

\begin{proposition}[\cite{r14}]\label{pro4}
The map $\Psi:\frac{R[x]}{\left \langle x^{p^s}-1 \right \rangle} \rightarrow \frac{R[x]}{\left \langle x^{p^s}-\gamma \right \rangle}$ given by $f(x)\mapsto f(\gamma _0x)$ is a ring isomorphism. In particular, for $A \subseteq \frac{R[x]}{\left \langle x^{p^s}-1 \right \rangle},\,B \subseteq \frac{R[x]}{\left \langle x^{p^s}-\gamma \right \rangle} $ with $\Psi(A)=B$, then $A$ is an ideal of $\frac{R[x]}{\left \langle x^{p^s}-1 \right \rangle}$ if and only if $B$ is an ideal of $\frac{R[x]}{\left \langle x^{p^s}-\gamma \right \rangle}$. Equivalently, $A$ is a cyclic code of length $p^s$ over $R$ if and only if $B$ is a $\gamma$-constacyclic code of length $p^s$ over $R$.
\end{proposition}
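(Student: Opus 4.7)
The plan is to establish Proposition \ref{pro4} in three stages: show that $\Psi$ is a well-defined ring homomorphism, produce an explicit inverse, and then deduce the ideal correspondence (equivalently, the code correspondence) as a formal consequence of being a ring isomorphism.

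First I would begin with the substitution map $\widetilde{\Psi}\colon R[x]\to R[x]$, $f(x)\mapsto f(\gamma_0 x)$. This is obviously a ring homomorphism. To pass to the quotients, the key calculation is to check that $\widetilde{\Psi}$ sends the ideal $\langle x^{p^s}-1\rangle$ into $\langle x^{p^s}-\gamma\rangle$. Using the identity $\gamma_0^{p^s}=\gamma^{-1}$ recorded in the paragraph preceding the proposition,
\begin{displaymath}
\widetilde{\Psi}(x^{p^s}-1) = \gamma_0^{p^s}x^{p^s}-1 = \gamma^{-1}x^{p^s}-1 = \gamma^{-1}(x^{p^s}-\gamma)\in\langle x^{p^s}-\gamma\rangle.
\end{displaymath}
By the universal property of the quotient ring, $\widetilde{\Psi}$ therefore factors through a well-defined ring homomorphism $\Psi\colon R[x]/\langle x^{p^s}-1\rangle\to R[x]/\langle x^{p^s}-\gamma\rangle$.

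Next I would produce the inverse. Since $\gamma_0\in\mathbb F_{p^m}^{\ast}$, $\gamma_0^{-1}$ exists in $R$, and $(\gamma_0^{-1})^{p^s}=\gamma$. The analogous argument gives a well-defined ring homomorphism $\Phi\colon R[x]/\langle x^{p^s}-\gamma\rangle\to R[x]/\langle x^{p^s}-1\rangle$ sending $g(x)\mapsto g(\gamma_0^{-1}x)$, since $\Phi(x^{p^s}-\gamma)=\gamma x^{p^s}-\gamma=\gamma(x^{p^s}-1)$. A direct verification on the generator $x$ shows $\Phi\circ\Psi$ and $\Psi\circ\Phi$ are the respective identities, so $\Psi$ is a ring isomorphism.

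Finally, because $\Psi$ is a ring isomorphism between the two ambient rings, it induces a bijection between their ideals, sending $A$ to $\Psi(A)=B$. Translating through the standard module isomorphism $R^{p^s}\cong R[x]/\langle x^{p^s}-\lambda\rangle$ recalled in Section 2 (where cyclic codes correspond to ideals in the $\lambda=1$ quotient and $\gamma$-constacyclic codes to ideals in the $\lambda=\gamma$ quotient), this bijection is exactly the asserted one-to-one correspondence between cyclic codes and $\gamma$-constacyclic codes of length $p^s$ over $R$. The only genuine content is the well-definedness check, which rests entirely on the pre-chosen value $\gamma_0=\gamma^{-p^{m-\gamma_r}}$ satisfying $\gamma_0^{p^s}=\gamma^{-1}$; everything else is formal manipulation with quotient rings.
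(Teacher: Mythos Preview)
Your proof is correct. The paper itself does not supply a proof of this proposition: it is quoted from \cite{r14} and stated without argument, so there is no ``paper's own proof'' to compare against. Your approach---verifying that the substitution $x\mapsto\gamma_0 x$ respects the defining ideals via the identity $\gamma_0^{p^s}=\gamma^{-1}$, exhibiting the explicit inverse $x\mapsto\gamma_0^{-1}x$, and then invoking the formal fact that ring isomorphisms induce bijections on ideals---is the standard and natural one, and is essentially what the original reference does as well.
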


Now, using the isomorphism $\Psi$, the results about cyclic code of length $p^s$ over $R$ in \cite{r14} can
be applied to corresponding $\gamma$-constacyclic codes of length $p^s$ over $R$.

\begin{lemma}[\cite{r14}]\label{le2}
$\gamma _0x-1$ is nilpotent in $\mathcal{R}_\gamma$ with nilpotency index $p^s$. The ring $\mathcal{R}_\gamma$ is a local ring with the maximal ideal $\left \langle \,u, \,\gamma _0x-1\, \right \rangle$, but it is not a chain ring.
\end{lemma}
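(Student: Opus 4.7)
The plan is to handle the three assertions in sequence. For the nilpotency of $\gamma_0 x - 1$, I would use the characteristic-$p$ Frobenius identity, which gives
\begin{displaymath}
(\gamma_0 x - 1)^{p^s} \;=\; \gamma_0^{p^s} x^{p^s} - 1 \;=\; \gamma^{-1} x^{p^s} - 1 \;=\; \gamma^{-1}(x^{p^s} - \gamma) \;\equiv\; 0 \pmod{x^{p^s} - \gamma},
\end{displaymath}
so the nilpotency index is at most $p^s$. To see that it is exactly $p^s$, I would reduce modulo $u$ and work in $S := \mathbb{F}_{p^m}[x]/\langle x^{p^s} - \gamma \rangle$, observing that $(\gamma_0 x - 1)^{p^s - 1}$ is represented by an honest polynomial of degree $p^s - 1 < p^s$, hence is nonzero in $S$ and \emph{a fortiori} nonzero in $\mathcal{R}_\gamma$.

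For the local property, I would write $\mathcal{R}_\gamma = S \oplus uS$ as a free $S$-module of rank two, so every element has a unique expression $f(x) + u g(x)$ with $f, g \in S$. Then
\begin{displaymath}
\mathcal{R}_\gamma / \langle u, \gamma_0 x - 1 \rangle \;\cong\; S / \langle \gamma_0 x - 1 \rangle \;\cong\; \mathbb{F}_{p^m}
\end{displaymath}
is a field, so $M := \langle u, \gamma_0 x - 1 \rangle$ is maximal. Because $u$ and $\gamma_0 x - 1$ commute and are both nilpotent, $M$ itself is a nilpotent ideal; hence any $r \notin M$ maps to a unit of the residue field and lifts to a unit of $\mathcal{R}_\gamma$ by the standard geometric-series argument ($rs = 1 + m$ with $m \in M$ nilpotent implies $1 + m$, and therefore $r$, is a unit). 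Thus $M$ is the unique maximal ideal.

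Finally, to show that $\mathcal{R}_\gamma$ is not a chain ring, I would prove that $\langle u \rangle$ and $\langle \gamma_0 x - 1 \rangle$ are incomparable. If $u = (\gamma_0 x - 1)(f + u g)$ with $f, g \in S$, then matching components in the decomposition $S \oplus uS$ forces $(\gamma_0 x - 1) g = 1$ in $S$, contradicting the nilpotency of $\gamma_0 x - 1$ in $S$. Conversely, if $\gamma_0 x - 1 = u(f + u g) = uf$, then reducing modulo $u$ yields $\gamma_0 x - 1 = 0$ in $S$, which is false since this polynomial has degree $1 < p^s$. I expect the main thing to watch is keeping the direct-sum decomposition $\mathcal{R}_\gamma = S \oplus uS$ straight throughout; once that is set up, each contradiction collapses to a single line, so the step presents careful bookkeeping rather than a genuine conceptual obstacle.
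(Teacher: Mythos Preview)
Your argument is correct on all three points. The paper does not actually supply a proof of this lemma; it is quoted verbatim from \cite{r14} (Dinh, 2010) as background. So there is no in-paper proof to compare against, only the citation.

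Since you have written a self-contained argument, a couple of remarks on presentation. Your use of the decomposition $\mathcal{R}_\gamma = S \oplus uS$ with $S = \mathbb{F}_{p^m}[x]/\langle x^{p^s}-\gamma\rangle$ is exactly the right organizing device, and your incomparability argument for $\langle u\rangle$ and $\langle \gamma_0 x - 1\rangle$ is clean. One small sharpening: rather than saying ``$M$ is a nilpotent ideal'' and then invoking a geometric-series argument, you could simply note that $M^{p^s+1}=0$ (any monomial $u^a(\gamma_0 x-1)^b$ with $a+b \ge p^s+1$ vanishes since either $a\ge 2$ or $b\ge p^s$), which makes the nilpotency of $M$ explicit and the unit-lifting immediate. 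Otherwise the proposal needs no changes.
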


Then all the $\gamma$-constacyclic codes of length $p^s$ over $R$ , i.e., ideals of the local ring $\mathcal{R}_\gamma$, are classified into four types, as follows.

\begin{theorem}[\cite{r14}]\label{th2}
$\gamma$-constacyclic codes of length $p^s$ over $R$ , i.e., ideals of the local ring $\mathcal{R}_\gamma$, are:
\begin{itemize}
  \item Type 1 (trivial ideals): $\left \langle \,0\, \right \rangle$, $\left \langle \,1\, \right \rangle$.
  \item Type 2 (principal ideals with nonmonic polynomial generators): $\left \langle \,u(\gamma _0x-1)^i\, \right \rangle$,  where $0\leqslant i\leqslant p^s-1$.
  \item Type 3 (principal ideals with monic polynomial generators): $$\left \langle \,(\gamma _0x-1)^i+u(\gamma _0x-1)^th(x)\, \right \rangle, $$ where $1\leqslant i\leqslant p^s-1$, and either $h(x)$ is $0$ or $h(x)$ is a unit where it can be represented as $h(x)=\sum _jh_j(\gamma _0x-1)^j$, $h_j\in \mathbb F_{p^m}$ and $h_0\neq 0$.
  \item Type 4 (nonprincipal ideals): $\left \langle \,(\gamma _0x-1)^i+u(\gamma _0x-1)^th(x),  u(\gamma _0x-1)^\omega\, \right \rangle$, where $1\leqslant i\leqslant p^s-1$, $\omega < T$, where $T$ is the smallest integer such that $u(\gamma _0x-1)^T\in \left \langle \,(\gamma _0x-1)^i+u(\gamma _0x-1)^th(x) \,\right \rangle$, with $h(x)$ as in Type 3, and $\deg h(x)\leqslant \omega -t-1$.
\end{itemize}
\end{theorem}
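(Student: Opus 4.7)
The plan is to combine Proposition \ref{pro4} with a residue–torsion analysis of ideals. By Proposition \ref{pro4}, classifying ideals of $\mathcal{R}_\gamma$ is equivalent to classifying cyclic codes of length $p^s$ over $R$. Since $\operatorname{char}(R) = p$, we have $x^{p^s} - 1 = (x - 1)^{p^s}$, and the substitution $y = \gamma_0 x - 1$ presents $\mathcal{R}_\gamma$ as the local ring $A = R[y]/\langle y^{p^s}\rangle$ with maximal ideal $\langle u, y\rangle$ (Lemma \ref{le2}). Its residue ring $A/uA \cong \mathbb{F}_{p^m}[y]/\langle y^{p^s}\rangle$ is a chain ring whose ideals are precisely $\langle y^k\rangle$ for $0 \leq k \leq p^s$, so I parametrise every ideal of $A$ by its image and $u$-torsion downstairs.

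For an ideal $I$ of $A$, define
\[
\operatorname{Res}(I) = (I + uA)/uA, \qquad \operatorname{Tor}(I) = \{b \in A/uA \mid ub \in I\}.
\]
Both are ideals of $A/uA$, so $\operatorname{Res}(I) = \langle y^i\rangle$ and $\operatorname{Tor}(I) = \langle y^\omega\rangle$ for unique $0 \leq \omega \leq i \leq p^s$ (the inequality because $a + ub \in I$ implies $ua \in I$, giving $\operatorname{Res}(I) \subseteq \operatorname{Tor}(I)$). A short exact sequence argument identical to the one producing $|C| = |\operatorname{Res}(C)|\cdot|\operatorname{Tor}(C)|$ in Section~2 yields $|I| = p^{m(2p^s - i - \omega)}$; this allows me to conclude equality of ideals from equality of both invariants.

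Now I split on $i$. If $i = 0$, then $I$ contains some unit $1 + ug(y)$, so $I = A = \langle 1\rangle$ (Type~1). If $i = p^s$, then $I \subseteq uA$, and every element has the form $ub(y)$, giving $I = u\operatorname{Tor}(I) = \langle uy^\omega\rangle$, which is $\langle 0\rangle$ (Type~1) when $\omega = p^s$ and Type~2 when $\omega \leq p^s - 1$. For $1 \leq i \leq p^s - 1$, I choose $f(y) = y^i + ug(y) \in I$ lifting the generator of the residue. Since $uy^j \in I$ for every $j \geq \omega$, I subtract such monomials from $f$ to force $\deg g < \omega$; writing $g(y) = y^t h_0(y)$ with $h_0$ a unit of $\mathbb{F}_{p^m}[y]/\langle y^{p^s}\rangle$ (nonzero constant coefficient) yields the degree bound $\deg h_0 \leq \omega - t - 1$, unless $g = 0$. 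Let $T$ be the smallest integer with $uy^T \in \langle f\rangle$; computing $\operatorname{Tor}(\langle f\rangle) = \langle y^T\rangle$ and comparing with $\operatorname{Tor}(I) = \langle y^\omega\rangle$ forces $\omega \leq T$. When $\omega = T$, the ideals $\langle f\rangle$ and $I$ share both invariants, so by the size formula $I = \langle f\rangle$ is principal (Type~3). When $\omega < T$, the element $uy^\omega$ lies in $I \setminus \langle f\rangle$, and $\langle f, uy^\omega\rangle$ again matches $I$ in residue and torsion, so $I = \langle f, uy^\omega\rangle$ (Type~4).

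The main obstacle is the canonical-form step: starting from an arbitrary lift $g(y)$, one must justify the reduction to $y^t h_0(y)$ with the prescribed degree cap and then identify $\operatorname{Tor}(\langle f\rangle)$ with an explicit $\langle y^T\rangle$, since $T$ depends intrinsically on the data $(i, t, h_0)$ and not on the chosen generator. A secondary point is verifying that Types 3 and 4 are genuinely disjoint and that the parameters do not overcount, which follows from the residue–torsion invariants together with the degree bound $\omega - t - 1$. Reverting the substitution $y = \gamma_0 x - 1$ transfers the classification back to $\mathcal{R}_\gamma$ exactly as stated.
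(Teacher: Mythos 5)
The paper offers no proof of Theorem~\ref{th2} at all: it is imported verbatim from \cite{r14}, where the classification is obtained by a direct case analysis of ideals of the local ring $R[x]/\langle x^{p^s}-1\rangle$ and then transported to $\mathcal{R}_\gamma$ via the isomorphism of Proposition~\ref{pro4}. Your residue--torsion argument is a correct and essentially equivalent route, organized somewhat more systematically. The change of variable is sound: with $y=\gamma_0x-1$ one has $(y+1)^{p^s}-1=y^{p^s}$ in characteristic $p$ and $\gamma_0^{p^s}=\gamma^{-1}$, so $\mathcal{R}_\gamma\cong R[y]/\langle y^{p^s}\rangle$; the invariants $Res(I)=\langle y^{i}\rangle$ and $Tor(I)=\langle y^{\omega}\rangle$ with $\omega\leqslant i$ are well defined because $\mathbb F_{p^m}[y]/\langle y^{p^s}\rangle$ is a chain ring; and the counting identity $|I|=p^{m(2p^s-i-\omega)}$ legitimately upgrades the inclusions $\langle f\rangle\subseteq I$ and $\langle f,uy^{\omega}\rangle\subseteq I$ to equalities once the two invariants match. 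The two points you flag as obstacles are genuine but routine: the normalization of the lift $f=y^{i}+ug(y)$ should also dispose of the case $t\geqslant i$ (there $1+uy^{t-i}h_0(y)$ is a unit, so $\langle f\rangle=\langle y^{i}\rangle$, i.e.\ Type~3 with $h=0$), and the explicit value of $T$ as a function of $(i,t,h)$ is exactly Proposition~\ref{pro5}, likewise quoted from \cite{r14}. What your approach buys is a single bookkeeping device, the pair $(i,\omega)$ together with the size count, that proves the classification and the cardinalities of Theorem~\ref{th3} in one pass; what it still owes is the non-redundancy of the parameter list, which is where the degree cap $\deg h\leqslant\omega-t-1$ does its work.
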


For $\gamma$-constacyclic codes of Type 4 in Theorem \ref{th2}, the number $T$ plays a
very important role. We now determine $T$ for each code $C=\left \langle \,(\gamma _0x-1)^i+u(\gamma _0x-1)^th(x),  u(\gamma _0x-1)^\omega\, \right \rangle$.

\begin{proposition}[\cite{r14}]\label{pro5}
Let $T$ be the smallest integer such that $$u(\gamma _0x-1)^T\in \left \langle (\gamma _0x-1)^i+u(\gamma _0x-1)^th(x) \right \rangle\,, $$ then
\begin{displaymath}
T = \left\{ \begin{array}{ll}
i,  & \textrm{if $h(x)=0$};\\
\min\left \{ \, i\,, \,p^s-i+t\right \},  & \textrm{if $h(x)\neq 0 $}.
\end{array} \right.
\end{displaymath}
\end{proposition}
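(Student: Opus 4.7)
The plan is to work in the local ring $\mathcal{R}_\gamma$ by setting $y = \gamma_0 x - 1$, so that $y^{p^s} = 0$ by Lemma~\ref{le2} and every element has a unique representation $f(y) + u g(y)$ with $f, g \in \mathbb{F}_{p^m}[y]$ of degree less than $p^s$. Under this presentation, the condition $u y^T \in \langle y^i + u y^t h(y) \rangle$ becomes the assertion that there exist such $f, g$ satisfying
\begin{equation*}
(f(y) + u g(y))(y^i + u y^t h(y)) = u y^T \quad \text{in } \mathcal{R}_\gamma.
\end{equation*}

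I would expand the left-hand side using $u^2 = 0$ and match the $u^0$- and $u^1$-components separately in $\mathbb{F}_{p^m}[y]/(y^{p^s})$. This yields the pair of congruences $f(y) y^i \equiv 0$ and $f(y) y^t h(y) + g(y) y^i \equiv y^T$ modulo $y^{p^s}$. The first forces $f(y) = y^{p^s - i} \tilde{f}(y)$ for some polynomial $\tilde{f}$ with $\deg \tilde{f} < i$; substituting into the second, the problem reduces to determining the smallest $T$ for which
\begin{equation*}
y^{p^s - i + t} \tilde{f}(y) h(y) + g(y) y^i \equiv y^T \pmod{y^{p^s}}
\end{equation*}
is solvable in $\tilde{f}, g \in \mathbb{F}_{p^m}[y]$.

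If $h(y) = 0$, this collapses to $g(y) y^i \equiv y^T$, whose smallest solution is plainly $T = i$ with $g = 1$, giving the first branch. If $h(y) \neq 0$, then $h_0 \neq 0$ makes $h(y)$ invertible in each truncated ring $\mathbb{F}_{p^m}[y]/(y^N)$, and I can exhibit solutions realizing both candidate values of $T$: take $\tilde{f} = 0,\, g = 1$ to realize $T = i$, and take $g = 0$ together with $\tilde{f} \equiv h(y)^{-1}$ modulo $y^{i-t}$ to realize $T = p^s - i + t$. For the matching lower bound, the key observation is that the left-hand side automatically belongs to the ideal generated by $y^{p^s - i + t}$ and $y^i$ in $\mathbb{F}_{p^m}[y]/(y^{p^s})$, which equals the principal ideal $(y^{\min\{i,\, p^s - i + t\}})$; hence any nonzero remainder of the special form $y^T$ must satisfy $T \geq \min\{i, p^s - i + t\}$.

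The main obstacle is precisely this lower bound in the case $h(y) \neq 0$, where one must rule out the possibility that a clever cancellation between the two summands drops the valuation below $\min\{i, p^s - i + t\}$; the ideal-membership observation handles this cleanly. A small piece of bookkeeping is the boundary regime $t \geq i$, where the generator factors as $y^i + u y^t h(y) = y^i\bigl(1 + u y^{t-i} h(y)\bigr)$ and the right-hand factor is a unit in $\mathcal{R}_\gamma$; the ideal is therefore associate to $\langle y^i \rangle$, and the formula $\min\{i, p^s - i + t\} = i$ correctly matches the value produced by the $h = 0$ branch.
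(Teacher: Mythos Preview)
The paper does not supply its own proof of this proposition: it is quoted verbatim from \cite{r14} and stated without argument, so there is nothing to compare your approach against at the level of method. That said, your proposal is a correct and self-contained proof. The substitution $y=\gamma_0x-1$ together with the separation into $u^0$- and $u^1$-components reduces the membership condition to the pair of congruences you wrote in $\mathbb{F}_{p^m}[y]/(y^{p^s})$, and your ideal-containment observation $(y^{p^s-i+t},\,y^i)=(y^{\min\{i,\,p^s-i+t\}})$ gives the lower bound cleanly, ruling out any cancellation that could drop the $y$-adic valuation below the minimum. The explicit choices $(\tilde f,g)=(0,1)$ and $(\tilde f,g)=(h^{-1}\bmod y^{i-t},\,0)$ realise both candidate values, so the minimum is attained.

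One minor remark: your closing paragraph on the ``boundary regime $t\geq i$'' is not actually needed here. In the context of Theorem~\ref{th2} and Theorem~\ref{th3} the standing convention when $h(x)\neq 0$ is $0\le t<i$ (otherwise, as you correctly observe, the generator is associate to $(\gamma_0x-1)^i$ and one is back in the $h=0$ case). Your argument is already consistent with that hypothesis, so the extra discussion does no harm but can be omitted.
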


We now compute the size of each $\gamma$-constacyclic code $C$. By the definitions of $Tor(C)$ and $Res(C)$ and $|C|=|Res(C)|\cdot|Tor(C)|$, we have the following result.

\begin{theorem}[\cite{r14}]\label{th3}
$C$ is $\gamma$-constacyclic codes of length $p^s$ over $R$, as classified in Theorem~3.2.
Then the number of codewords $n_C$ of $C$ is determined as follows.
\begin{itemize}
  \item If $C=\left \langle \,0 \,\right \rangle$, then $n_C=1$.
  \item If $C=\left \langle \,1 \,\right \rangle$, then $n_C=p^{2mp^s}$.
  \item If $C=\left \langle \,u(\gamma _0x-1)^i\, \right \rangle$, where $0\leqslant i\leqslant p^s-1$, then $n_C=p^{m(p^s-i)}$.
  \item If $C=\left \langle \,(\gamma _0x-1)^i\, \right \rangle$, where $1\leqslant i\leqslant p^s-1$, then $n_C=p^{2m(p^s-i)}$.
  \item If $C=\left \langle \,(\gamma _0x-1)^i+u(\gamma _0x-1)^th(x)\, \right \rangle$, where $1\leqslant i\leqslant p^s-1$, $0\leqslant t\leqslant i-1$, and $h(x)$ is a unit, then
 \begin{displaymath}
n_C = \left\{ \begin{array}{ll}
p^{2m(p^s-i)},  & \textrm{if $1\leq i \leq \frac{p^{s}+t}{2}$};\\
p^{m(p^s-t)},  & \textrm{if $\frac{p^{s}+t}{2} <i \leq p^{s}-1$}.
\end{array} \right.
\end{displaymath}
  \item If $C=\left \langle \,(\gamma _0x-1)^i+u(\gamma _0x-1)^th(x),  u(\gamma _0x-1)^\kappa \, \right \rangle$, where $1\leqslant i\leqslant p^s-1$, $0\leqslant t< i$, $h(x)$ is $0$ or a unit, and
\begin{displaymath}
\kappa < T = \left\{ \begin{array}{ll}
i,  & \textrm{if\, $h(x)=0$};\\
\min\left \{  i, p^s-i+t\right \},  & \textrm{if\, $h(x)\neq 0 $}.
\end{array} \right.
\end{displaymath}
then $n_C=p^{m(2p^s-i-\kappa )}$.
\end{itemize}
\end{theorem}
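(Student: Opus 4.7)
My approach rests on the identity $|C|=|Res(C)|\cdot|Tor(C)|$ from Section~2 combined with the observation that both $Res(C)$ and $Tor(C)$ are themselves $\gamma$-constacyclic codes of length $p^s$ over $\mathbb F_{p^m}$. Because $(\gamma_0 x-1)^{p^s}=\gamma_0^{p^s}x^{p^s}-1=\gamma^{-1}(x^{p^s}-\gamma)$, the element $\gamma_0 x-1$ is nilpotent of index $p^s$ in $\mathbb F_{p^m}[x]/\langle x^{p^s}-\gamma\rangle$, which is therefore a chain ring whose ideals are exactly $\langle(\gamma_0 x-1)^j\rangle$ for $0\le j\le p^s$, each of size $p^{m(p^s-j)}$. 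Computing $|C|$ thus reduces to identifying the exponents $j_1,j_2$ with $Res(C)=\langle(\gamma_0 x-1)^{j_1}\rangle$ and $Tor(C)=\langle(\gamma_0 x-1)^{j_2}\rangle$.

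The trivial ideals and Type~2 are immediate: for $C=\langle u(\gamma_0 x-1)^i\rangle$, every codeword is divisible by $u$ so $Res(C)=\{0\}$, and $ub\in C$ forces $b\in\langle(\gamma_0 x-1)^i\rangle$, giving $|C|=p^{m(p^s-i)}$. Type~3 with $h(x)=0$ is similarly direct: both $Res(C)$ and $Tor(C)$ equal $\langle(\gamma_0 x-1)^i\rangle$, so $|C|=p^{2m(p^s-i)}$. For Type~4, $C=\langle g,\,u(\gamma_0 x-1)^\kappa\rangle$ with $g=(\gamma_0 x-1)^i+u(\gamma_0 x-1)^t h(x)$, one reads off $Res(C)=\langle(\gamma_0 x-1)^i\rangle$ and, using $\kappa<T\le i$, $Tor(C)=\langle(\gamma_0 x-1)^\kappa\rangle$: reducing any expression $ub=f_1 g+f_2 u(\gamma_0 x-1)^\kappa$ modulo $u$ shows $f_1\in uR[x]$, whence $b\in\langle(\gamma_0 x-1)^i\rangle+\langle(\gamma_0 x-1)^\kappa\rangle=\langle(\gamma_0 x-1)^\kappa\rangle$ since $\kappa<i$.

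The main obstacle is Type~3 with a unit $h(x)$, which is where the case split $i\le(p^s+t)/2$ versus $i>(p^s+t)/2$ enters. The residue code is still $Res(C)=\langle(\gamma_0 x-1)^i\rangle$, but determining $Tor(C)$ requires showing that $Tor(C)=\langle(\gamma_0 x-1)^T\rangle$ for $T=\min\{i,\,p^s-i+t\}$ from Proposition~\ref{pro5}. The inclusion $\supseteq$ is immediate from $u(\gamma_0 x-1)^T\in C$. For the reverse inclusion, I would write any $ub\in C$ as $(a+uc)g$ with $a,c\in\mathbb F_{p^m}[x]$; expanding with $u^2=0$, the $u$-free part $a(\gamma_0 x-1)^i$ must vanish in $\mathcal R_\gamma$, forcing $(\gamma_0 x-1)^{p^s-i}\mid a$ in $\mathbb F_{p^m}[x]$ and hence $b\in\langle(\gamma_0 x-1)^i\rangle+\langle(\gamma_0 x-1)^{p^s-i+t}\rangle=\langle(\gamma_0 x-1)^T\rangle$. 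Multiplying sizes then gives $p^{m(p^s-i)}\cdot p^{m(p^s-T)}$, which evaluates to $p^{2m(p^s-i)}$ when $T=i$ and to $p^{m(p^s-i)}\cdot p^{m(i-t)}=p^{m(p^s-t)}$ when $T=p^s-i+t$, matching the stated formula in both subcases.
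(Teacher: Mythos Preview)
Your approach---computing $Res(C)$ and $Tor(C)$ as ideals of the chain ring $\mathbb F_{p^m}[x]/\langle x^{p^s}-\gamma\rangle$ and invoking $|C|=|Res(C)|\cdot|Tor(C)|$---is exactly the method the paper indicates (the theorem is quoted from \cite{r14} with only the remark ``By the definitions of $Tor(C)$ and $Res(C)$ and $|C|=|Res(C)|\cdot|Tor(C)|$'' offered as justification). Your treatment of Types~1--3 is correct and matches this.

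There is, however, a small slip in your Type~4 argument. Reducing $ub=f_1g+f_2u(\gamma_0x-1)^\kappa$ modulo $u$ does \emph{not} force $f_1\in uR[x]$; it only forces the residue $\bar f_1\in\mathbb F_{p^m}[x]$ to satisfy $\bar f_1(\gamma_0x-1)^i\equiv 0$, i.e.\ $(\gamma_0x-1)^{p^s-i}\mid\bar f_1$. Writing $\bar f_1=(\gamma_0x-1)^{p^s-i}a'$ and expanding then gives
\[
b\in\langle(\gamma_0x-1)^{p^s-i+t}\rangle+\langle(\gamma_0x-1)^i\rangle+\langle(\gamma_0x-1)^\kappa\rangle,
\]
with an extra summand $\langle(\gamma_0x-1)^{p^s-i+t}\rangle$ that your write-up omits. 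The conclusion is unaffected: when $h\neq 0$ one has $\kappa<T\le p^s-i+t$, and when $h=0$ that summand does not appear; in either case the sum collapses to $\langle(\gamma_0x-1)^\kappa\rangle$ as you claim. So the fix is local and your overall argument stands.
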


\begin{theorem}\label{th4}
Let $C$ be a $\gamma$-constacyclic code of length $p^s$ over $\mathbb F_{p^m}+u\mathbb F_{p^m}$.
\begin{itemize}
  \item If $C=\left \langle 0 \right \rangle$, then $C^{\perp_\sigma}=\mathcal{R}_1$. If $C=\left \langle 1 \right \rangle=\mathcal{R}_1$, then $C^{\perp_\sigma}=\left \langle 0 \right \rangle$.
  \item If $C=\left \langle u(\gamma _0x-1)^i \right \rangle$, where $0\leqslant i\leqslant p^s-1$, then $C^{\perp_\sigma}=\left \langle \big(\theta(\gamma _0^{-1})x-1\big)^{p^s-i}, u \right \rangle$.
  \item If $C=\left \langle (\gamma _0x-1)^i+u(\gamma _0x-1)^th(x) \right \rangle$, where $1\leqslant i\leqslant p^s-1$, $0\leqslant t< i$, $h(x)$ is $0$ is a unit which can be represented as $h(x)=\sum _jh_j(\gamma _0x-1)^j$, $h_j\in \mathbb F_{p^m}$ and $h_0\neq 0$.

(1)~If $h(x)=0$, then $C^{\perp_\sigma}=\left \langle \big(\theta(\gamma _0^{-1})x-1\big)^{p^s-i} \right \rangle$.

(2)~If $h(x)$ is a unit, and $1\leq i \leq \frac{p^{s}+t}{2}$, then $C^{\perp_\sigma}=\left \langle a(x) \right \rangle,$ where
$$a(x)= \big(\theta(\gamma _0^{-1})x-1\big)^{p^s-i}-u\big(\theta(\gamma _0^{-1})x-1\big)^{p^s+t-2i}\sum _{j=0}^{i-t-1}\varepsilon \theta(h_j)(-1)^{j+t-i}\big(\theta(\gamma _0^{-1})x-1\big)^jx^{i-j-t}.$$

(3)~If $h(x)$is a unit, and $\frac{p^{s}+t}{2} <i \leq p^{s}-1$, then $C^{\perp_\sigma}=\left \langle b(x),u\big(\theta(\gamma _0^{-1})x-1\big)^{p^s-i} \right \rangle,$ where

$$b(x)= \big(\theta(\gamma _0^{-1})x-1\big)^{i-t}-u\sum _{j=0}^{p^s-i-1}\varepsilon \theta(h_j)(-1)^{j+t-i}\big(\theta(\gamma _0^{-1})x-1\big)^jx^{i-j-t}.$$

  \item If $C=\left \langle (\gamma _0x-1)^i+u(\gamma _0x-1)^th(x),  u(\gamma _0x-1)^\omega \right \rangle\,, $  where $1\leqslant i\leqslant p^s-1$, $\omega < T$, $T$ is the smallest integer such that $u(\gamma _0x-1)^T\in \left \langle (\gamma _0x-1)^i+u(\gamma _0x-1)^th(x) \right \rangle$, $h(x)$ is defined as Type 3 in Theorem~3.2, and $\deg{h(x)}\leqslant \omega -t-1$.

(1)~If $h(x)=0$, then $C^{\perp_\sigma}=\left \langle \big(\theta(\gamma _0^{-1})x-1\big)^{p^s-\omega }, u \big(\theta(\gamma _0^{-1})x-1\big)^{p^s-i }\right \rangle$.

(2)~If $h(x)$ is a unit, then $C^{\perp_\sigma}=\left \langle d(x),u\big(\theta(\gamma _0^{-1})x-1\big)^{p^s-i} \right \rangle,$ where
$$d(x)= \big(\theta(\gamma _0^{-1})x-1\big)^{p^s-\omega }-u\big(\theta(\gamma _0^{-1})x-1\big)^{p^s-i-\omega+t }\sum _{j=0}^{\omega -t-1}\varepsilon \theta(h_j)(-1)^{j+t-i}\big(\theta(\gamma _0^{-1})x-1\big)^jx^{i-j-t}.$$
\end{itemize}
\end{theorem}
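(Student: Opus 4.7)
The plan is to apply the identity $C^{\perp_\sigma}=\sigma^{-1}\bigl(\mathcal A(C)^{*}\bigr)$ recorded at the end of Section~2. For each of the six cases listed in Theorem~3.2, I would (i)~determine the annihilator $\mathcal A(C)\subseteq\mathcal R_\gamma$, (ii)~take reciprocal polynomials of a generating set to land in $\mathcal R_{\gamma^{-1}}$ and obtain $C^{\perp}$, and (iii)~apply $\sigma^{-1}=\Theta_{\theta,\varepsilon}$ coefficient-wise to produce $C^{\perp_\sigma}\subseteq\mathcal R_{\theta(\gamma^{-1})}$. At each step the cardinality identity $|C|\cdot|C^{\perp_\sigma}|=p^{2mp^s}$ from Proposition~2.1, paired with the size tables of Theorem~3.3, serves as a sanity check that the candidate annihilator is the full one.

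The trivial cases and the $h=0$ cases reduce to the elementary computation $\mathcal A(\langle(\gamma_0x-1)^i\rangle)=\langle(\gamma_0x-1)^{p^s-i}\rangle$ and its Type~2/Type~4 analogues, justified by Lemma~3.2 and size counting. The core of the proof is the case where $h(x)$ is a unit, for which the central identity is
\[
\bigl[(\gamma_0x-1)^{p^s-i}-u(\gamma_0x-1)^{p^s-2i+t}h(x)\bigr]\cdot\bigl[(\gamma_0x-1)^{i}+u(\gamma_0x-1)^{t}h(x)\bigr]=0,
\]
which follows from $(\gamma_0x-1)^{p^s}=0$ and $u^2=0$, and is valid whenever $p^s-2i+t\geq 0$, i.e.\ $i\leq(p^s+t)/2$. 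In that range this yields a principal generator of $\mathcal A(C)$ whose size (from Theorem~3.3) matches $|C^{\perp_\sigma}|$, giving Case~(2). When $i>(p^s+t)/2$ the above identity degenerates, and instead I would construct a shorter monic annihilator of degree $i-t$ by solving $h(x)+c(x)\equiv 0\pmod{(\gamma_0x-1)^{p^s-i}}$ with $c(x)=-\sum_{j=0}^{p^s-i-1}h_j(\gamma_0x-1)^j$, so that $(\gamma_0x-1)^{i-t}+uc(x)$ annihilates the generator of $C$. Coupled with the manifest element $u(\gamma_0x-1)^{p^s-i}\in\mathcal A(C)$, this yields Case~(3); Proposition~3.5 (characterising the threshold $T$) is what guarantees that the truncation at $j=p^s-i-1$ gives the full annihilator. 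The Type~4 cases of $C$ are handled by an analogous but longer bookkeeping using $\omega$ in place of $i$ in the $u$-generator.

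Passing from $\mathcal A(C)$ to $C^{\perp}$ uses the identity $[(\gamma_0x-1)^{k}]^{*}=(-\gamma_0)^{k}(\gamma_0^{-1}x-1)^{k}$. Expanding $h(x)=\sum_jh_j(\gamma_0x-1)^j$ and reciprocating term-by-term, each summand of degree less than the leading degree acquires a factor $x^{i-j-t}$ from the difference of degrees and a sign $(-1)^{j+t-i}$ from the leading-coefficient normalisation. Finally $\sigma^{-1}=\Theta_{\theta,\varepsilon}$ substitutes $\theta$ into every $\mathbb F_{p^m}$-coefficient and attaches $\varepsilon$ to each $u$-coefficient, producing $a(x)$, $b(x)$, $d(x)$ up to a unit scalar and the ideal-theoretic adjustment coming from adding multiples of the supplementary generator $u(\theta(\gamma_0^{-1})x-1)^{p^s-i}$. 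The main obstacle I anticipate is the clean treatment of the threshold $i=(p^s+t)/2$: above it the annihilator acquires the extra generator, the range of $j$ in the correction sum shrinks from $0\leq j\leq i-t-1$ to $0\leq j\leq p^s-i-1$, and the leading degree of the monic part jumps from $p^s-i$ to $i-t$; size verification via Theorem~3.3 at each step is essential to confirm the candidate is the full annihilator and not a proper subideal.
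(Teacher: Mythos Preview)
Your proposal is correct and follows the same conceptual route as the paper: use $C^{\perp_\sigma}=\sigma^{-1}(C^{\perp})=\sigma^{-1}(\mathcal A(C)^{*})$ from Proposition~2.1, then apply $\Theta_{\theta,\varepsilon}$ coefficient-wise. The only difference is that the paper does not carry out the annihilator and reciprocal computations itself; it simply cites Theorems~4.7--4.9 of \cite{r16} for the Euclidean duals $C^{\perp}$ and then invokes Proposition~2.1, whereas you propose to rederive those duals from scratch via the explicit annihilator identities and size-counting---so your write-up would be more self-contained but otherwise identical in spirit.
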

\begin{proof}
It follows immediately from Proposition \ref{pro1} and Theorem 4.7, Theorem 4.8, Theorem 4.9 of \cite{r16}.
\end{proof}

\section{$\sigma$-self-orthogonal $\lambda$-constacyclic codes of length $p^s$}
In this section, we study the $\sigma$-self-orthogonality of the $\lambda$-constacyclic codes of length $p^s$ over $\mathbb F_{p^m}+u\mathbb F_{p^m}$. We provide necessary and sufficient conditions for a $\lambda$-constacyclic code to be $\sigma$-self-orthogonal using the relation between the polynomials of the generating sets of a $\lambda$-constacyclic codes and its $\sigma$-dual code. In particular, we get $\sigma$-self-duality of the constacyclic codes over $\mathbb F_{p^m}+u\mathbb F_{p^m}$ from their $\sigma$-self-orthogonality.

Consider the code $\left \langle u \right \rangle=\,uR^n=\left \{ \,u\textbf{c}\,|\,\textbf{c}\in R^n\,  \right \}$ of length $n$ over $R$. Clearly, for any unit $\lambda$ of $R$, $uR^n$ is $\gamma$-constacyclic code of length $n$ over $\mathbb F_{p^m}+u\mathbb F_{p^m}$. It is also the ideal of $\frac{R[x]}{\left \langle x^n-\lambda  \right \rangle}$ generated by $u$. $\left \langle u \right \rangle$ can also denote this ideal.
\subsection{$\sigma$-self-orthogonal ($\alpha +u\beta$)-constacyclic codes}
$\lambda =\alpha +u\beta$, where $\alpha \, $, $\beta $ are nonzero elements of $\mathbb F_{p^m}$.  $\mathcal{R}_{\alpha , \beta } =\frac{R[x]}{\left \langle x^{p^s}-(\alpha +u\beta ) \right \rangle}\,,\,\alpha _0^{p^s}=\alpha ^{-1}\,,\,\alpha _0 \in \mathbb F_{p^m}$.

\begin{theorem}\label{th5}
Let $C$ be an ($\alpha +u\beta$)-constacyclic code of length $p^s$ over $\mathbb F_{p^m}+u\mathbb F_{p^m}$ and $C=\left \langle (\alpha_0x-1)^i\right \rangle\subseteq \mathcal{R}_{\alpha , \beta }$ for some $0\leqslant i\leqslant 2p^s$, then $C$ is $\sigma$-self-orthogonal if and only if $p^s\leqslant i\leqslant 2p^s$.
\end{theorem}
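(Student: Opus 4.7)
The plan is to prove both directions by essentially separate, rather different arguments: necessity via a cardinality count, sufficiency via the fact that $C$ lies inside the ideal $\langle u\rangle$.

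For the only-if direction, I would simply compare the sizes of $C$ and $C^{\perp_\sigma}$. Lemma~\ref{le1} tells us that $|C| = |\langle(\alpha_0 x - 1)^i\rangle| = p^{m(2p^s - i)}$, while Theorem~\ref{th1} gives $|C^{\perp_\sigma}| = p^{mi}$. If $C \subseteq C^{\perp_\sigma}$, then $|C| \le |C^{\perp_\sigma}|$, which forces $2p^s - i \le i$, i.e.\ $i \ge p^s$. So the sizes alone rule out $\sigma$-self-orthogonality whenever $i < p^s$.

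For the if direction, assume $p^s \le i \le 2p^s$. The key observation is Lemma~\ref{le1}, which states that $\langle (\alpha_0 x - 1)^{p^s}\rangle = \langle u\rangle$ in $\mathcal{R}_{\alpha,\beta}$. Consequently
\begin{displaymath}
C = \langle (\alpha_0 x - 1)^i\rangle \subseteq \langle (\alpha_0 x - 1)^{p^s}\rangle = \langle u\rangle,
\end{displaymath}
so every codeword of $C$ has the form $u\mathbf{v}$ with $\mathbf{v}\in R^{p^s}$. By Proposition~\ref{pro3}, $\sigma = \Theta_{\theta',\varepsilon'}$ for some $\theta'\in Aut(\mathbb{F}_{p^m})$ and $\varepsilon'\in\mathbb{F}_{p^m}^\ast$, and in particular $\sigma(u)=\varepsilon' u$. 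Hence for any two codewords $u\mathbf{a},u\mathbf{b}\in C$,
\begin{displaymath}
\langle u\mathbf{a},\,u\mathbf{b}\rangle_\sigma = \sum_{k=0}^{p^s-1} (ua_k)\,\sigma(ub_k) = \varepsilon'\, u^2 \sum_{k=0}^{p^s-1} a_k\,\theta'(b_k) = 0,
\end{displaymath}
since $u^2=0$. Therefore $C\subseteq C^{\perp_\sigma}$.

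I do not expect any serious obstacle: the only mildly delicate point is to notice that the automorphism $\sigma$, whatever its precise form, still sends $u$ to a scalar multiple of $u$ (Proposition~\ref{pro3}), so that $u^2 = 0$ kills the $\sigma$-inner product of any two $u$-multiples. Once this is in hand, sufficiency is immediate and the cardinality bound from Lemma~\ref{le1} and Theorem~\ref{th1} takes care of necessity, so the whole argument fits in a few lines.
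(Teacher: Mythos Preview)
Your proof is correct, and the necessity argument is identical to the paper's: both compare $|C|=p^{m(2p^s-i)}$ with $|C^{\perp_\sigma}|=p^{mi}$ and conclude $i\geqslant p^s$.

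For sufficiency, both you and the paper start from the same key observation, namely $C\subseteq\langle u\rangle$ (via Lemma~\ref{le1}), but finish differently. The paper stays on the ideal-theoretic side: it notes that the set $\langle u\rangle$, viewed in the dual ring $\mathcal{R}_{\theta(\alpha^{-1}),-\varepsilon\theta(\beta\alpha^{-2})}$, equals $\langle(\theta(\alpha_0^{-1})x-1)^{p^s}\rangle$, which is contained in $\langle(\theta(\alpha_0^{-1})x-1)^{2p^s-i}\rangle=C^{\perp_\sigma}$ because $2p^s-i\leqslant p^s$. You instead compute the $\sigma$-inner product directly, using $\sigma(u)=\varepsilon' u$ and $u^2=0$ to see that any two elements of $uR^{p^s}$ are $\sigma$-orthogonal. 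Your route is slightly more elementary and avoids juggling $\langle u\rangle$ as an ideal of two different quotient rings; the paper's route keeps everything phrased in terms of the explicit generator of $C^{\perp_\sigma}$ from Theorem~\ref{th1}. One small notational point: since you allow $\mathbf{a},\mathbf{b}\in R^{p^s}$, write $\sigma(b_k)$ rather than $\theta'(b_k)$ in the displayed sum (or first observe that $uR^{p^s}=u\mathbb{F}_{p^m}^{p^s}$ so that $b_k\in\mathbb{F}_{p^m}$); either way the product vanishes because of the $u^2$ factor.
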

\begin{proof}
By Theorem \ref{th1}, $C^{\perp_\sigma}=\left \langle \big(\theta(\alpha_0^{-1})x-1\big)^{2p^s-i} \right \rangle\subseteq  \mathcal{R}_{\theta(\alpha^{-1}) , -\varepsilon\theta(\beta \alpha^{-2})}$.

Necessity. If $C$ is $\sigma$-self-orthogonal, then $C\subseteq C^{\perp_\sigma} $, which yields $|C|\leqslant |C^{\perp_\sigma}|\,.$ It follows from Theorem \ref{th1} that $|C|=p^{m(2p^s-i)}\,,\, |C^{\perp_\sigma}| =p^{mi}$. That means that $p^{m(2p^s-i)}\leqslant p^{mi}$, i.e., $i\geqslant p^s $. Since $0\leqslant i\leqslant 2p^s$, we have $p^s\leqslant i\leqslant 2p^s$.

Sufficiency. From $i\geqslant p^s $ we have $ p^s\geqslant 2 p^s-i $. Therefore,
\begin{displaymath}
\begin{split}
C&=\left \langle (\alpha_0x-1)^i\right \rangle=\left \langle (\alpha_0x-1)^{p^s}(\alpha_0x-1)^{i-p^s}\right \rangle=\left \langle u(\alpha_0x-1)^{i-p^s} \right \rangle\\
&\subseteq \left \langle u \right \rangle=\left \langle \big(\theta(\alpha_0^{-1})x-1\big)^{p^s} \right \rangle\subseteq \left \langle \big(\theta(\alpha_0^{-1})x-1\big)^{2p^s-i} \right \rangle=C^{\perp_\sigma},
\end{split}
\end{displaymath}
which implies that $C$ is $\sigma$-self-orthogonal.
\end{proof}

\begin{theorem}\label{th6}
$\left \langle u \right \rangle$ is the unique $\sigma$-self-dual ($\alpha +u\beta$)-constacyclic code of length $p^s$ over $\mathbb F_{p^m}+u\mathbb F_{p^m}$.
\end{theorem}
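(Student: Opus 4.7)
The plan is to reduce $\sigma$-self-duality to a size constraint, and then identify the unique ideal satisfying it using Lemma~\ref{le1}. Any $\sigma$-self-dual code is in particular $\sigma$-self-orthogonal, so by Theorem~\ref{th5} the code must have the form $C=\langle(\alpha_0x-1)^i\rangle$ with $p^s\leqslant i\leqslant 2p^s$. The first step is therefore to restrict attention to this range of $i$.

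Next, I would extract the forced value of $i$ from cardinalities. By Proposition~\ref{pro1}(2), $|C|\cdot|C^{\perp_\sigma}|=|R|^{p^s}=p^{2mp^s}$, so $C=C^{\perp_\sigma}$ forces $|C|=p^{mp^s}$. On the other hand, Lemma~\ref{le1} gives $|C|=p^{m(2p^s-i)}$, so comparing the two expressions yields $i=p^s$. Combined with Lemma~\ref{le1}, which identifies $\langle(\alpha_0x-1)^{p^s}\rangle$ with $\langle u\rangle$ inside $\mathcal R_{\alpha,\beta}$, this pins down $C=\langle u\rangle$ as the only candidate.

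Finally, I would verify that $\langle u\rangle$ really is $\sigma$-self-dual. Applying Theorem~\ref{th1} with $i=p^s$ shows
\[
\langle u\rangle^{\perp_\sigma}=\bigl\langle\bigl(\theta(\alpha_0^{-1})x-1\bigr)^{p^s}\bigr\rangle\subseteq\mathcal R_{\theta(\alpha^{-1}),\,-\varepsilon\theta(\beta\alpha^{-2})},
\]
which by the analogue of Lemma~\ref{le1} for this ambient ring equals $\langle u\rangle$ as well. The one small subtlety to address is that $C$ and $C^{\perp_\sigma}$ a~priori live in different quotient rings $\mathcal R_{\alpha,\beta}$ and $\mathcal R_{\theta(\alpha^{-1}),\,-\varepsilon\theta(\beta\alpha^{-2})}$; however, as subsets of $R^{p^s}$ both $\langle u\rangle$'s coincide with $uR^{p^s}$, which is independent of the constacyclic shift, so the equality $C=C^{\perp_\sigma}$ is genuine.

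No real obstacle arises here: the argument is a short synthesis of Theorem~\ref{th5}, the size count of Proposition~\ref{pro1}(2)/Theorem~\ref{th1}, and Lemma~\ref{le1}. The only step that demands a line of care is the last one, namely noting that $\langle u\rangle$ denotes the same subset of $R^{p^s}$ regardless of which $\lambda$-constacyclic structure is imposed, so that the self-duality statement is unambiguous.
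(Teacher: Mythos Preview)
Your proposal is correct and follows essentially the same approach as the paper: both arguments pin down $i=p^s$ via the cardinality identity (you use Proposition~\ref{pro1}(2) together with Lemma~\ref{le1}, the paper uses Theorem~\ref{th1} directly for both $|C|$ and $|C^{\perp_\sigma}|$), and then identify $\langle(\alpha_0x-1)^{p^s}\rangle$ with $\langle u\rangle$. The only differences are cosmetic: the paper checks that $\langle u\rangle$ is $\sigma$-self-dual first and then proves uniqueness, whereas you reverse the order; and your preliminary appeal to Theorem~\ref{th5} to restrict to $p^s\leqslant i\leqslant 2p^s$ is harmless but unnecessary, since the size count alone already forces $i=p^s$.
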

\begin{proof}
Since $\left \langle u \right \rangle=\left \langle (\alpha_0x-1)^{p^s} \right \rangle$, It follows from Theorem \ref{th5} that $\left \langle u \right \rangle$ is $\sigma$-self-orthogonal, i.e., $\left \langle u \right \rangle\subseteq\left \langle u \right \rangle^{\perp_\sigma}$. By Theorem \ref{th1}, $|\left \langle u \right \rangle|=p^{m(2p^s-p^s)}=p^{mp^s}$, $|\left \langle u \right \rangle^{\perp_\sigma}|=p^{mp^s}$, which means that $|\left \langle u \right \rangle|=|\left \langle u \right \rangle^{\perp_\sigma}|$. Hence, $\left \langle u \right \rangle=\left \langle u \right \rangle^{\perp_\sigma}$, i.e., $\left \langle u \right \rangle$ is $\sigma$-self-dual.

Next, we will prove the uniqueness. Assume that $C=\left \langle (\alpha_0x-1)^i \right \rangle\subseteq \mathcal{R}_{\alpha , \beta }$ is $\sigma$-self-dual, then $C=C^{\perp_\sigma}$. That follows that $|C|=|C^{\perp_\sigma}|$. In view of Theorem \ref{th1}, $|C|=p^{m(2p^s-i)}$, $|C^{\perp_\sigma}|=p^{mi}$. This leads to $p^{m(2p^s-i)}=p^{mi}$, i.e., $i=p^s$, implying that $C=\left \langle (\alpha_0x-1)^{p^s} \right \rangle=\left \langle u \right \rangle$.
\end{proof}

\subsection{$\sigma$-self-orthogonal $\gamma$-constacyclic codes}
$\lambda =\gamma $, where $\gamma\, $ is a nonzero element of $\mathbb F_{p^m}$. $\mathcal{R}_\gamma=\frac{R[x]}{\left \langle x^{p^s}-\gamma \right \rangle}$, $\gamma _0^{p^s}=\gamma ^{-1}\,,\,\gamma _0 \in \mathbb F_{p^m}$. Since $\sigma^{-1}(\gamma^{-1})=\theta^{-1}(\gamma^{-1})$, the $\sigma$-dual code of a $\gamma$-constacyclic code is a $\theta^{-1}(\gamma^{-1})$-constacyclic code and $\big(\theta(\gamma _0^{-1})\big)^{p^s}=\theta^{-1}(\gamma^{-1})$

Obviously, we have the following result.

\begin{theorem}\label{th7}
Let $C$ be a $\gamma$-constacyclic code of length $p^s$ over $R$ and $C=\left \langle \,0\, \right \rangle$ or $C=\left \langle \, 1 \,\right \rangle$. Then $C$ is $\sigma$-self-orthogonal if and only if $C=\left \langle \,0\, \right \rangle$.
\end{theorem}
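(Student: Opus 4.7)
The statement concerns the two trivial ideals of $\mathcal{R}_\gamma$, so the argument should be brief and split into the two cases directly.

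The plan is to handle the two possibilities $C=\langle 0\rangle$ and $C=\langle 1\rangle$ separately, and in each case compare $C$ with $C^{\perp_\sigma}$ using the structure already recorded in Theorem~\ref{th4}. For sufficiency, if $C=\langle 0\rangle$, then the condition $\langle \mathbf{c},\mathbf{x}\rangle_\sigma = 0$ is vacuous for every $\mathbf{x}\in R^{p^s}$, so $C^{\perp_\sigma}=R^{p^s}=\mathcal{R}_1$ (as also stated in the first bullet of Theorem~\ref{th4}). In particular $C=\{\mathbf{0}\}\subseteq C^{\perp_\sigma}$, so $\langle 0\rangle$ is $\sigma$-self-orthogonal.

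For necessity, I would argue by contradiction: suppose $C=\langle 1\rangle=\mathcal{R}_\gamma$ is $\sigma$-self-orthogonal. By Theorem~\ref{th4} (first bullet), $C^{\perp_\sigma}=\langle 0\rangle$. Then $\sigma$-self-orthogonality $C\subseteq C^{\perp_\sigma}$ forces $\mathcal{R}_\gamma\subseteq\langle 0\rangle$, which is impossible since $\mathcal{R}_\gamma$ contains the nonzero element $1$. Alternatively, and perhaps more cleanly, I can appeal to the cardinality identity from Proposition~\ref{pro1}(2): $|C|\cdot|C^{\perp_\sigma}|=|R|^{p^s}=p^{2mp^s}$. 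Since $|C|=p^{2mp^s}$ when $C=\langle 1\rangle$, this gives $|C^{\perp_\sigma}|=1$, so $C\not\subseteq C^{\perp_\sigma}$.

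There is essentially no obstacle here, since both cases reduce to a one-line check. The only thing to be careful about is citing the correct bullet of Theorem~\ref{th4} for the identification $\langle 1\rangle^{\perp_\sigma}=\langle 0\rangle$, and making explicit that $\langle 0\rangle\subseteq D$ for every code $D$ so that the zero code is automatically $\sigma$-self-orthogonal regardless of the choice of $\sigma$ and $\gamma$.
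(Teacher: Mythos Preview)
Your proposal is correct and is exactly the straightforward verification the paper has in mind: the paper actually gives no proof at all (it just prefaces the theorem with ``Obviously, we have the following result''), so your case-by-case check using the first bullet of Theorem~\ref{th4} and the cardinality identity from Proposition~\ref{pro1}(2) is precisely the argument being left to the reader.
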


 Clearly, when $C=\left \langle \,0\, \right \rangle$ or $C=\left \langle \,1\, \right \rangle$, $C$ is not $\sigma$-self-dual.

\begin{theorem}\label{th8}
Let $C$ be a $\gamma$-constacyclic code of length $p^s$ over $R$ and $C=\left \langle u(\gamma _0x-1)^i \right \rangle$, where $0\leqslant i\leqslant p^s-1$, then

(1) $C$ is $\sigma$-self-orthogonal.

(2) $C$ is $\sigma$-self-dual if and only if $i=0$, i.e., $C=\left \langle \,u\, \right \rangle$.
\end{theorem}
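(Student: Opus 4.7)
The plan is to combine the explicit description of $C^{\perp_\sigma}$ provided by Theorem \ref{th4} with the cardinality identity $|C|\cdot|C^{\perp_\sigma}| = |R|^{p^s}$ from Proposition \ref{pro1}. First I would invoke Theorem \ref{th4} to record that
$$C^{\perp_\sigma} = \left\langle \bigl(\theta(\gamma_0^{-1})x - 1\bigr)^{p^s - i},\ u \right\rangle,$$
which notably contains the generator $u$ explicitly.

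For part (1), the key observation is that every element of $C$ is visibly a multiple of $u$: any representative has the form $u\cdot f(x)(\gamma_0 x - 1)^i$, so as a subset of $R^{p^s}$ the code $C$ lies in $uR^{p^s}$. The same reasoning used in the proof of Theorem \ref{th5} (namely that multiplication by $u$ annihilates the $u$-part of the constacyclic shift parameter because $u^2=0$) shows that this subset $uR^{p^s}\subseteq R^{p^s}$ coincides with the ideal $\langle u\rangle$ in every quotient ring $R[x]/\langle x^{p^s}-\mu\rangle$ for a unit $\mu$. Hence
$$C \subseteq \langle u \rangle \subseteq C^{\perp_\sigma},$$
where the second inclusion is immediate since $u$ is a listed generator of $C^{\perp_\sigma}$. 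This establishes $\sigma$-self-orthogonality with essentially no computation.

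For part (2), given the containment from (1), $\sigma$-self-duality reduces to the numerical condition $|C|=|C^{\perp_\sigma}|$. Theorem \ref{th3} gives $|C| = p^{m(p^s - i)}$, while Proposition \ref{pro1}(2) yields $|C^{\perp_\sigma}| = p^{2mp^s}/|C| = p^{m(p^s+i)}$. Equating the exponents forces $p^s - i = p^s + i$, i.e.\ $i = 0$, in which case $C = \langle u\rangle$; conversely, when $i=0$ both sides equal $p^{mp^s}$, so $\langle u\rangle$ is indeed $\sigma$-self-dual.

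The only subtlety, which I would not call a real obstacle, is justifying that $\langle u\rangle$ makes sense as a common subset of $R^{p^s}$ across the two potentially different ambient quotient rings $\mathcal{R}_\gamma$ and $\mathcal{R}_{\theta^{-1}(\gamma^{-1})}$ in which $C$ and $C^{\perp_\sigma}$ naturally sit. This is resolved exactly as in the author's proof of Theorem \ref{th5}: under the standard identification with $R^{p^s}$, the ideal generated by $u$ is the fixed subset $uR^{p^s} = \{(ua_0,\dots,ua_{p^s-1}) : a_j\in\mathbb F_{p^m}\}$, independent of the constacyclic constant, since $u\cdot x^{p^s}\equiv u\mu$ reduces to an $\mathbb F_{p^m}$-scalar regardless of $\mu$.
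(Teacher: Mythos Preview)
Your proposal is correct and follows essentially the same route as the paper: invoke Theorem~\ref{th4} to write $C^{\perp_\sigma}=\langle (\theta(\gamma_0^{-1})x-1)^{p^s-i},\,u\rangle$, then chain the inclusions $C\subseteq\langle u\rangle\subseteq C^{\perp_\sigma}$ for part~(1), and compare cardinalities for part~(2). The only cosmetic differences are that the paper computes $|C^{\perp_\sigma}|=p^{m(p^s+i)}$ via Theorem~\ref{th3} rather than via Proposition~\ref{pro1}(2), and for the sufficiency direction of~(2) it reads $C^{\perp_\sigma}=\langle u\rangle$ directly from Theorem~\ref{th4} at $i=0$ instead of combining inclusion with cardinality; your extra paragraph on why $\langle u\rangle$ is the same subset of $R^{p^s}$ in both ambient rings is exactly the remark the paper makes at the opening of Section~4.
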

\begin{proof}
By Theorem \ref{th4}, $C^{\perp_\sigma}=\left \langle \big(\theta(\gamma _0^{-1})x-1\big)^{p^s-i}, u \right \rangle\subseteq \mathcal{R}_{\theta(\gamma^{-1})}.$

(1) Since $C=\left \langle u(\gamma _0x-1)^i \right \rangle\subseteq \left \langle u \right \rangle\subseteq \left \langle \big(\theta(\gamma _0^{-1})x-1\big)^{p^s-i}, u\, \right \rangle=C^{\perp_\sigma}$, $C\subseteq C^{\perp_\sigma}\, $, i.e., $C$ is $\sigma$-self-orthogonal.

(2) Necessity. Because $C$ is $\sigma$-self-dual, i.e., $C = C^{\perp_\sigma} $, $|\,C\,|=|\,C^{\perp_\sigma}\,|$. In view of Theorem \ref{th3}, $|\,C\,|=p^{m(p^s-i)}$, $|\,C^{\perp_\sigma}\,|=p^{m(p^s+i)}$, which implies that $p^{m(p^s-i)}=p^{m(p^s+i)}$. Hence, $i=0$. Then $C=\left \langle \,u\, \right \rangle$.

Sufficiency. If $C=\left \langle \,u\, \right \rangle$, then $C^{\perp_\sigma}=\left \langle \,u\, \right \rangle$. It follows that $C=C^{\perp_\sigma}$, i.e., $C$ is $\sigma$-self-dual.
\end{proof}

\begin{lemma}\label{le3}
Let $C$ be a Type 3 or Type 4 $\gamma$-constacyclic code of length $p^s$ over $R$, as in Theorem \ref{th2}. If $C$ is $\sigma$-self-orthogonal, then $\gamma _0=\theta(\gamma _0^{-1})$.
\end{lemma}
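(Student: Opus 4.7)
The approach is to exploit the tension between the two different constacyclic structures on $C$ and its $\sigma$-dual. Let $\tilde{\gamma} := \sigma^{-1}(\gamma^{-1})$, so that $C$ is invariant under $\tau_\gamma$ while Proposition~\ref{pro2} ensures $C^{\perp_\sigma}$ is invariant under $\tau_{\tilde{\gamma}}$. For any $\mathbf{c} \in C$, the self-orthogonality hypothesis gives $\mathbf{c} \in C^{\perp_\sigma}$, so we simultaneously have $\tau_\gamma(\mathbf{c}) \in C \subseteq C^{\perp_\sigma}$ and $\tau_{\tilde{\gamma}}(\mathbf{c}) \in C^{\perp_\sigma}$. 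Since $C^{\perp_\sigma}$ is linear, this forces the difference
\[
\tau_\gamma(\mathbf{c}) - \tau_{\tilde{\gamma}}(\mathbf{c}) \;=\; \bigl((\gamma - \tilde{\gamma})\,c_{p^s - 1},\,0,\,\ldots,\,0\bigr)
\]
to lie in $C^{\perp_\sigma}$. The plan is to contradict this under the assumption $\gamma \neq \tilde{\gamma}$ by producing a codeword $\mathbf{c} \in C$ whose last coordinate is a unit of $R$; indeed, in that case $(\gamma - \tilde{\gamma})c_{p^s - 1}$ is a unit, so $C^{\perp_\sigma}$ contains a unit multiple of $e_0 = (1,0,\ldots,0)$, which identifies with the multiplicative identity of $\mathcal{R}_{\tilde{\gamma}}$ and therefore forces $C^{\perp_\sigma} = \mathcal{R}_{\tilde{\gamma}}$; Proposition~\ref{pro1} then gives $|C| = 1$, contradicting the fact that a Type 3 or Type 4 generator is a nonzero element of $\mathcal{R}_\gamma$.

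The main obstacle is therefore the existence of such a codeword, and I would handle it by descending to the residue code. For both Type 3 and Type 4 with $1 \leq i \leq p^s - 1$, reducing the listed generators modulo $u$ shows $Res(C) = \langle (\gamma_0 x - 1)^i \rangle$ inside $\mathbb{F}_{p^m}[x]/\langle x^{p^s} - \gamma\rangle$. Multiplying this generator by $x^{p^s - 1 - i}$, which is legal since $i \leq p^s - 1$ and keeps the degree below $p^s$, yields a polynomial of degree exactly $p^s - 1$ with leading coefficient $\gamma_0^i \neq 0$, hence a vector in $Res(C)$ with nonzero last entry. Lifting through the surjection $\phi : C \to Res(C)$ produces $\mathbf{c} \in C$ whose last coordinate has the form $\gamma_0^i + u\,b_{p^s-1}$, and this is a unit of $R$ because its image modulo $u$ is nonzero.

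It remains to translate $\gamma = \tilde{\gamma} = \sigma^{-1}(\gamma^{-1})$ into the stated identity. Using the description of $\sigma^{-1}|_{\mathbb{F}_{p^m}}$ given at the start of Section~4.2, the equality reads $\theta(\gamma) = \gamma^{-1}$. Raising $\gamma_0$ to the $p^s$-th power and applying $\theta$, and recalling $\gamma_0^{p^s} = \gamma^{-1}$, I would compute $\theta(\gamma_0)^{p^s} = \theta(\gamma^{-1}) = \gamma = \gamma_0^{-p^s}$, so that $(\theta(\gamma_0)\,\gamma_0)^{p^s} = 1$ in $\mathbb{F}_{p^m}$. Since $\operatorname{char}\mathbb{F}_{p^m} = p$, the polynomial $X^{p^s} - 1 = (X-1)^{p^s}$ has only the root $1$, so $\theta(\gamma_0)\,\gamma_0 = 1$, and applying $\theta$ to both sides of the resulting identity $\theta(\gamma_0) = \gamma_0^{-1}$ gives $\gamma_0 = \theta(\gamma_0^{-1})$, as claimed.
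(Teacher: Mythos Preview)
Your proof is correct and reaches the same contradiction as the paper ($C^{\perp_\sigma}$ is the whole ambient ring, forcing $C=0$), but the route is genuinely different. The paper works purely at the polynomial level: it takes the generator $c(x)=(\gamma_0x-1)^i+ug(x)$, observes that since $\deg c(x)\le p^s-1$ the same polynomial represents the corresponding vector inside $\mathcal{R}_{\theta(\gamma^{-1})}$, and then notes that if $\gamma_0\neq\theta(\gamma_0^{-1})$ one can write $\gamma_0x-1=(\theta(\gamma_0)\gamma_0-1)+\theta(\gamma_0)\gamma_0\bigl(\theta(\gamma_0^{-1})x-1\bigr)$ with nonzero constant term, so $\gamma_0x-1$ is a unit in $\mathcal{R}_{\theta(\gamma^{-1})}$ and hence $c(x)$ is too. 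Your argument instead stays at the vector level, exploiting that $\tau_\gamma(\mathbf c)-\tau_{\tilde\gamma}(\mathbf c)$ is supported in a single coordinate, and then manufactures a codeword with unit last entry via the residue code and the surjection $\phi$. The paper's approach is shorter and uses the explicit generator directly; your approach is more structural (it really only uses that $Res(C)$ is a nonzero ideal of $\mathbb F_{p^m}[x]/\langle x^{p^s}-\gamma\rangle$) and would adapt to other situations where one does not have the generator in such a convenient form.

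One tiny wording slip at the very end: from $\theta(\gamma_0)\gamma_0=1$ you do not need to ``apply $\theta$'' --- simply use that $\theta$ is a ring homomorphism, so $\theta(\gamma_0^{-1})=\theta(\gamma_0)^{-1}=\gamma_0$. What you wrote (literally applying $\theta$) would give $\theta^2(\gamma_0)=\theta(\gamma_0^{-1})$, which is not what you want unless $\theta$ is an involution. This does not affect the validity of the argument.
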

\begin{proof}
Suppose $C$ is a Type 3 or Type 4 $\gamma$-constacyclic code of length $p^s$ over $R$, as in Theorem \ref{th2}. It implies that there exist $0\leqslant i\leqslant p^s-1$ and $g(x)\in\mathbb F_{p^m}[x],\,\deg g(x)\leqslant p^s-1$ such that $c(x)=(\gamma _0x-1)^i+ug(x)\in C$. Let $\textbf{c}$ be the $n$-tuples in $C$ whose polynomial representation is $c(x)$. Then $\textbf{c}\in C\subseteq C^{\perp_\sigma}$. However, $\deg c(x)\leqslant p^s-1$, which means that the polynomial representation of $\textbf{c}\in C^{\perp_\sigma}$  is also $c(x)$, i.e., $c(x)\in C^{\perp_\sigma} \lhd \mathcal{R}_{\theta(\gamma^{-1})}\,.$ Assume that $\gamma _0\neq\theta(\gamma _0^{-1})$, then $\theta(\gamma _0)\gamma _0-1\neq0.$ Thus, $\gamma _0x-1=(\theta(\gamma _0)\gamma _0-1)+\theta(\gamma _0)\gamma _0\big(\theta(\gamma _0^{-1})x-1\big)$ is a unit of $\mathcal{R}_{\theta(\gamma^{-1})}$. Therefore, $C^{\perp_\sigma} = \mathcal{R}_{\theta(\gamma^{-1})}$, which means $C=\left \langle \,0\, \right \rangle$, which contradicts that $C\neq \left \langle \,0\, \right \rangle$. Hence, $\gamma _0=\theta(\gamma _0^{-1})$.
\end{proof}

If $\gamma _0=\theta(\gamma _0^{-1})$, then $\gamma=\theta(\gamma^{-1})=\sigma^{-1}(\gamma^{-1})$. It means that the $\sigma$-dual code of a $\gamma$-constacyclic code of length $p^s$ over $R$ is also a $\gamma$-constacyclic code of length $p^s$ over $R$.

\begin{theorem}\label{th9}
Let $C$ be a $\gamma$-constacyclic code of length $p^s$ over $R$ and $C=\left \langle \,(\gamma _0x-1)^i\, \right \rangle$, where~$1\leqslant i\leqslant p^s-1$, $\gamma _0=\theta(\gamma _0^{-1})$, then $C$ is $\sigma$-self-orthogonal if and only if $i\geqslant \frac{p^s}{2}$.
\end{theorem}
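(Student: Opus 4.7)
The plan is to reduce everything to the shape of $C^{\perp_\sigma}$ already computed in Theorem~\ref{th4} and then do a straightforward size comparison plus a polynomial divisibility observation. The code $C = \langle (\gamma_0 x-1)^i \rangle$ is a Type~3 ideal with $h(x)=0$, so part~(1) of the Type~3 bullet in Theorem~\ref{th4} gives $C^{\perp_\sigma} = \langle (\theta(\gamma_0^{-1})x-1)^{p^s-i} \rangle$. Using the standing hypothesis $\gamma_0 = \theta(\gamma_0^{-1})$ (so that, as noted just above Theorem~\ref{th9}, $\sigma^{-1}(\gamma^{-1}) = \gamma$ and both $C$ and $C^{\perp_\sigma}$ live in the same ring $\mathcal{R}_\gamma$), I would rewrite
\[
C^{\perp_\sigma} = \langle (\gamma_0 x - 1)^{p^s - i}\rangle \subseteq \mathcal{R}_\gamma.
\]

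For the sufficiency direction, assume $i \geq p^s/2$, equivalently $i \geq p^s - i$. Since $1 \leq p^s - i \leq i$, I factor
\[
(\gamma_0 x - 1)^i = (\gamma_0 x - 1)^{p^s - i} \cdot (\gamma_0 x - 1)^{2i - p^s},
\]
which exhibits the generator of $C$ as an element of $\langle (\gamma_0 x - 1)^{p^s - i}\rangle = C^{\perp_\sigma}$. The exponent $2i - p^s$ is between $0$ and $p^s - 2$, so by Lemma~\ref{le2} the factor $(\gamma_0 x - 1)^{2i - p^s}$ is a nonzero element of $\mathcal{R}_\gamma$ and the product lies in $C^{\perp_\sigma}$. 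Consequently $C \subseteq C^{\perp_\sigma}$.

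For the necessity direction, I would argue by cardinalities. If $C \subseteq C^{\perp_\sigma}$, then $|C| \leq |C^{\perp_\sigma}|$. By Theorem~\ref{th3} applied to the Type~3 code with $h(x)=0$, $|C| = p^{2m(p^s - i)}$, and either a second application of Theorem~\ref{th3} to $C^{\perp_\sigma} = \langle (\gamma_0 x-1)^{p^s-i}\rangle$ or Proposition~\ref{pro1}(2) gives $|C^{\perp_\sigma}| = p^{2mi}$. The inequality $p^{2m(p^s-i)} \leq p^{2mi}$ collapses to $p^s - i \leq i$, i.e.\ $i \geq p^s/2$, as required.

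I do not foresee a real obstacle. The only point one must be slightly careful about is making sure the inclusion $C \subseteq C^{\perp_\sigma}$ is even a sensible statement, which is why the hypothesis $\gamma_0 = \theta(\gamma_0^{-1})$ is essential: without it $C^{\perp_\sigma}$ sits in a different ambient ring $\mathcal{R}_{\theta(\gamma^{-1})}$, and the Type~3 / Type~4 analysis in Lemma~\ref{le3} is precisely what justifies imposing this hypothesis from the outset. Once both codes are viewed inside the same chain-free ring $\mathcal{R}_\gamma$, the size bound handles necessity and the explicit factorization handles sufficiency.
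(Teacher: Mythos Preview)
Your proposal is correct and follows essentially the same route as the paper: identify $C^{\perp_\sigma}=\langle(\gamma_0x-1)^{p^s-i}\rangle\subseteq\mathcal{R}_\gamma$ via Theorem~\ref{th4} and the hypothesis $\gamma_0=\theta(\gamma_0^{-1})$, then compare exponents. The paper's write-up is terser (it simply asserts $p^s-i\leq i$ for necessity and the reverse inclusion for sufficiency), whereas you spell out the cardinality count via Theorem~\ref{th3} and the explicit factorization, but the underlying argument is the same.
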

\begin{proof}
By Theorem \ref{th4} and $\gamma _0=\theta(\gamma _0^{-1})$, $C^{\perp_\sigma}=\left \langle \,(\gamma _0x-1)^{p^s-i} \,\right \rangle\subseteq \mathcal{R}_{\gamma}$.

Necessity. Since $C\subseteq C^{\perp_\sigma} $, we have $p^s-i\leqslant i$, i.e., $i\geqslant \frac{p^s}{2}$.

Sufficiency. Since $i\geqslant \frac{p^s}{2}$, $p^s-i\leqslant i$. It means that $C\subseteq C^{\perp_\sigma} $, i.e., $C$ is $\sigma$-self-orthogonal.
\end{proof}

\begin{remark} Under the conditions of Theorem \ref{th9}, there doesn't exist $\sigma$-self-dual code for any given $\sigma \in Aut(R)$. In fact, assume that there exists a $\sigma$-self-dual code $C$, then $C = C^{\perp_\sigma} $. Hence, $|\,C\,|=|\,C^{\perp_\sigma}\,|$. It follows from Theorem \ref{th3} that
$|\,C\,|=p^{2m(p^s-i)}$, $|\,C^{\perp_\sigma}\,|=p^{2mi}$. It implies that $p^{2m(p^s-i)}=p^{2mi}$, i.e., $2i=p^s$, which contradicts that $p$ is an odd prime.
\end{remark}

\begin{theorem}\label{th10}
Let $C$ be a $\gamma$-constacyclic code of length $p^s$ over $R$ and
$$C=\left \langle \,(\gamma _0x-1)^i+u(\gamma _0x-1)^th(x)\, \right \rangle\subseteq \mathcal{R}_{\gamma},$$
where $\gamma _0=\theta(\gamma _0^{-1})$, $h(x)$ is a unit where it can be represented $h(x)=\sum _{j=0}^{i-t-1}h_j(\gamma _0x-1)^j$, $h_j\in \mathbb F_{p^m}$, $h_0\neq 0$ and $1\leq i \leq \frac{p^{s}+t}{2}$. Then $C$ is $\sigma$-self-orthogonal if and only if one of the following holds:

(a) $p^s\leqslant i+t$;

(b) $i+t\leqslant p^s\leqslant 2i$ and $(\gamma_0x-1)^{p^s-i-t}\,|\,(h(x)-h^{'}(x))$, where $$h^{'}(x)=-\sum _{j=0}^{i-t-1}\varepsilon \theta(h_j)(-1)^{j+t-i}( x-1)^jx^{i-j-t}.$$
\end{theorem}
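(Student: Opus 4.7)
The plan is to invoke Theorem~\ref{th4}, whose applicable case under the present hypotheses yields the explicit generator
$$a(x)=(\gamma_0 x-1)^{p^s-i}-u(\gamma_0 x-1)^{p^s+t-2i}H(x)$$
of $C^{\perp_\sigma}$, with $H(x)=\sum_{j=0}^{i-t-1}\varepsilon\theta(h_j)(-1)^{j+t-i}(\gamma_0 x-1)^j x^{i-j-t}$; here the assumption $\gamma_0=\theta(\gamma_0^{-1})$ guarantees that $a(x)$ lies in the same ring $\mathcal{R}_\gamma$ as $C$. Since $C$ and $C^{\perp_\sigma}$ are both principal ideals of $\mathcal{R}_\gamma$, the condition $C\subseteq C^{\perp_\sigma}$ is equivalent to the single membership $f(x):=(\gamma_0 x-1)^i+u(\gamma_0 x-1)^t h(x)\in\langle a(x)\rangle$.

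Before attacking that membership directly, I would extract the necessary inequality $p^s\leqslant 2i$ from Proposition~\ref{pro1}(2) and Theorem~\ref{th3}: under the stated range of $i$, $|C|=p^{2m(p^s-i)}$ and $|C^{\perp_\sigma}|=p^{2mi}$, so $C\subseteq C^{\perp_\sigma}$ forces $p^s-i\leqslant i$. This already rules out everything outside of the union of cases (a) and (b). The central algebraic step is then to search for $q(x)=q_0(x)+uq_1(x)\in\mathcal{R}_\gamma$ with $q(x)a(x)=f(x)$. Setting $y=\gamma_0 x-1$, which is nilpotent of index $p^s$ by Lemma~\ref{le2}, and expanding $qa$ using $u^2=0$, matching the $u$-free and $u$-coefficient parts modulo $y^{p^s}$ produces the two coupled congruences
$$q_0 y^{p^s-i}\equiv y^i\pmod{y^{p^s}},\qquad -q_0 y^{p^s+t-2i}H(x)+q_1 y^{p^s-i}\equiv y^t h(x)\pmod{y^{p^s}}.$$

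The first congruence is solvable precisely because $p^s\leqslant 2i$, and it pins $q_0$ down to $q_0=y^{2i-p^s}+y^i q_0''$ for an arbitrary $q_0''$. Substituting into the second congruence reduces the whole problem to the existence of $q_0'',q_1$ with
$$q_1 y^{p^s-i}\equiv y^t(h(x)-h'(x))+y^{p^s+t-i}q_0''H(x)\pmod{y^{p^s}},$$
where $h'(x)$ is the polynomial written in the theorem. Because $y^{p^s+t-i}=y^t\cdot y^{p^s-i}\in\langle y^{p^s-i}\rangle$, the $q_0''$-term is harmlessly absorbed into $q_1 y^{p^s-i}$, and solvability collapses to the divisibility $y^{p^s-i}\mid y^t(h(x)-h'(x))$ modulo $y^{p^s}$. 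In case (a), $p^s\leqslant i+t$, the factor $y^t$ is already divisible by $y^{p^s-i}$, so the condition is automatic; in case (b), $i+t\leqslant p^s\leqslant 2i$, the surplus factor $y^{p^s-i-t}$ must be absorbed by $h(x)-h'(x)$, yielding exactly the divisibility displayed in the theorem.

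The main obstacle I anticipate is the polynomial bookkeeping linking $H(x)$ to $h'(x)$: $H(x)$ mixes pure powers of $y=\gamma_0 x-1$ with powers of $x=(y+1)/\gamma_0$, and the automorphism $\sigma^{-1}=\Theta_{\theta,\varepsilon}$ contributes the factors $\varepsilon$ and $\theta$ together with the alternating signs $(-1)^{j+t-i}$, so one must carefully verify that the divisibility produced by the congruence truly coincides with $(\gamma_0 x-1)^{p^s-i-t}\mid(h(x)-h'(x))$ rather than an equivalent but different-looking condition. Apart from this, no step is genuinely hard: each is either a direct appeal to the cited structural results (Lemma~\ref{le2}, Theorem~\ref{th2}, Theorem~\ref{th3}, Theorem~\ref{th4}) or routine power-of-$y$ accounting inside the principal ideal $\langle a(x)\rangle$.
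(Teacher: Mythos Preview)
Your plan is correct and follows essentially the same route as the paper: write $C^{\perp_\sigma}$ as the principal ideal generated by $a(x)$ via Theorem~\ref{th4}, reduce $C\subseteq C^{\perp_\sigma}$ to the single membership $f(x)\in\langle a(x)\rangle$, split into $u$-free and $u$-coefficient congruences modulo $(\gamma_0x-1)^{p^s}$, solve the first for $q_0$ (forcing $p^s\le 2i$ and $q_0\equiv(\gamma_0x-1)^{2i-p^s}$ up to higher-order terms), and then read off the divisibility from the second. The only cosmetic differences are that the paper obtains $p^s\le 2i$ from the first congruence itself rather than from the size count, and that it makes the passage from ``divisibility modulo $(\gamma_0x-1)^{p^s}$'' to genuine divisibility in $\mathbb{F}_{p^m}[x]$ explicit via a degree bound (your final step should record that $\deg h,\deg h'<p^s-t$ so the congruence lifts); your anticipated ``bookkeeping obstacle'' is harmless, since $h'(x)=-H(x)$ by definition once $\gamma_0=\theta(\gamma_0^{-1})$ is used.
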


\begin{proof}
By Theorem \ref{th4} and $\gamma _0=\theta(\gamma _0^{-1})$,
\begin{displaymath}
C^{\perp_\sigma}
=\left \langle (\gamma_0x-1)^{p^s-i}-u(\gamma_0x-1)^{p^s+t-2i}\sum _{j=0}^{i-t-1}\varepsilon \theta(h_j)(-1)^{j+t-i}( x-1)^jx^{i-j-t} \right \rangle.
\end{displaymath}
Let
$$h^{'}(x)=-\sum _{j=0}^{i-t-1}\varepsilon \theta(h_j)(-1)^{j+t-i}( x-1)^jx^{i-j-t}.$$
Then $$C^{\perp_\sigma}=\left \langle \,(\gamma_0x-1)^{p^s-i}+u(\gamma_0x-1)^{p^s+t-2i}h^{'}(x)\, \right \rangle \subseteq \,\mathcal{R}_\gamma.$$

Necessity.
Since $C\subseteq C^{\perp_\sigma} $, there exist $f_1(x)+uf_2(x), f_i(x)\in \mathbb{F}_{p^m}[x]\, ,\, i=1, 2\, $ such that
\begin{equation}
\begin{split}
&(\gamma_0x-1)^i+u(\gamma_0x-1)^th(x)\\
 \equiv &\,[f_1(x)+uf_2(x)][(\gamma_0x-1)^{p^s-i}+u(\gamma_0x-1)^{p^s+t-2i}h^{'}(x)]\\
 \equiv &\,(\gamma_0x-1)^{p^s-i}f_1(x)+u[(\gamma_0x-1)^{p^s-i}f_2(x)+(\gamma_0x-1)^{p^s+t-2i}f_1(x)h^{'}(x)](mod(\gamma_0x-1)^{p^s}).\label{e1}
\end{split}
\end{equation}

Next, we will prove that we can find $ f_i(x)\in \mathbb{F}_{p^m}[x]\, ,\, i=1, 2\, $ satisfying Equation (\ref{e1}) and
$$f_1(x)=(\gamma_0x-1)^{2i-p^s}\,,\,\deg f_2(x)\leqslant i-1\,.$$

From Equation (\ref{e1}) we have $(\gamma_0x-1)^i \equiv (\gamma_0x-1)^{p^s-i}f_1(x)(\textrm{mod}(\gamma_0x-1)^{p^s})\,, $ which means that there exists $r(x)\in \mathbb{F}_{p^m}[x]$ such that $$(\gamma_0x-1)^{p^s-i}f_1(x)-(\gamma_0x-1)^i=r(x)(\gamma_0x-1)^{p^s}.$$
Hence,
\begin{displaymath}
\begin{split}
(\gamma_0x-1)^i &=(\gamma_0x-1)^{p^s-i}f_1(x)-r(x)(\gamma_0x-1)^{p^s}\\
 &=(\gamma_0x-1)^{p^s-i}[f_1(x)-r(x)(\gamma_0x-1)^i],
\end{split}
\end{displaymath}
implying that $i\geqslant p^s-i$, i.e., $p^s\leqslant 2i$  and $f_1(x)=(\gamma_0x-1)^{2i-p^s}+(\gamma_0x-1)^ir(x)\,.$ Let
$$f_1^{'}(x)=(\gamma_0x-1)^{2i-p^s}\,, \,f_2^{'}(x)=f_2(x)+(\gamma_0x-1)^tr(x)h^{'}(x), $$
and write $f_2^{'}(x)$ as the form
$f_2^{'}(x)=\sum _jd_j(\gamma_0x-1)^j$, $d_j\in \mathbb F_{p^m}$. Let $$f_2^{''}(x)=\sum _{j-0}^{i-1}d_j(\gamma_0x-1)^j\,.$$ Then
\begin{displaymath}
\begin{split}
&(\gamma_0x-1)^{p^s-i}f_2(x)+(\gamma_0x-1)^{p^s+t-2i}f_1(x)h^{'}(x)\\
\equiv &\,(\gamma_0x-1)^{p^s-i}f_2(x)+(\gamma_0x-1)^th^{'}(x)+(\gamma_0x-1)^{p^s+t-2i}r(x)h^{'}(x)\\
\equiv &\,(\gamma_0x-1)^{p^s-i}[f_2(x)+(\gamma_0x-1)^tr(x)h^{'}(x)]+(\gamma_0x-1)^th^{'}(x)\\
\equiv &\,(\gamma_0x-1)^{p^s-i}f_2^{'}(x)+(\gamma_0x-1)^{p^s+t-2i}f_1^{'}(x)h^{'}(x)\\
\equiv &\,(\gamma_0x-1)^{p^s-i}f_2^{''}(x)+(\gamma_0x-1)^{p^s+t-2i}f_1^{'}(x)h^{'}(x)(\textrm{mod}(\gamma_0x-1)^{p^s}).
\end{split}
\end{displaymath}
Therefore, we can write Equation (\ref{e1}) as
\begin{displaymath}
\begin{split}
&(\gamma_0x-1)^i+u(\gamma_0x-1)^th(x)\\
\equiv & \,[f_1^{'}(x)+uf_2^{''}(x)][(\gamma_0x-1)^{p^s-i}+u(\gamma_0x-1)^{p^s+t-2i}h^{'}(x)]\\
\equiv & \,(\gamma_0x-1)^i+u[(\gamma_0x-1)^{p^s-i}f_2^{''}(x)+(\gamma_0x-1)^th^{'}(x)](\textrm{mod}(\gamma_0x-1)^{p^s}),
\end{split}
\end{displaymath}
where $f_1^{'}(x)=(\gamma_0x-1)^{2i-p^s}\,,\,\deg f_2^{''}(x)\leqslant i-1\,.$
It means that $$(\gamma_0x-1)^th(x)\equiv (\gamma_0x-1)^{p^s-i}f_2^{''}(x)+(\gamma_0x-1)^th^{'}(x)(\textrm{mod}(\gamma_0x-1)^{p^s}), $$
i.e., $$(\gamma_0x-1)^t[h(x)-h^{'}(x)]\equiv (\gamma_0x-1)^{p^s-i}f_2^{''}(x)(\textrm{mod}(\gamma_0x-1)^{p^s}).$$
If $i+t\leqslant p^s\leqslant 2i$, then $p^s-i> t$. So
$$(\gamma_0x-1)^t\,[\,h(x)-h^{'}(x)\,]\,\equiv\, (\gamma_0x-1)^t(\gamma_0x-1)^{p^s-i-t}f_2^{''}(x)(\textrm{mod}(\gamma_0x-1)^{p^s}), $$
which yields that there exists $s(x)\in \mathbb{F}_{p^m}[x]$ such that
$$(\gamma_0x-1)^t[h(x)-h^{'}(x)-(\gamma_0x-1)^{p^s-i-t}f_2^{''}(x)]=s(x)(\gamma_0x-1)^{p^s}.$$
Hence,
\begin{equation}
h(x)-h^{'}(x)-(\gamma_0x-1)^{p^s-i-t}f_2^{''}(x)=s(x)(\gamma_0x-1)^{p^s-t}. \label{e2}
\end{equation}
Note that $\deg h(x)\leqslant i-t-1$, $\deg h^{'}(x)\leqslant i-t$, $\deg f_2^{''}(x)\leqslant i-1$ and we have the degree of the left of Equation (\ref{e2}) $\leqslant \max\left \{ i-t-1\,,\, i-t\,,\, p^s-t-1 \right \}=p^s-t-1$. Compare the the degrees of two sides of Equation (\ref{e2}) and we have
$s(x)=0$, which follows that $$h(x)-h^{'}(x)=(\gamma_0x-1)^{p^s-i-t}f_2^{''}(x), $$ i.e., $$(\gamma_0x-1)^{p^s-i-t}|(h(x)-h^{'}(x)).$$

Sufficiency.
In order to prove $C\subseteq C^{\perp_\sigma} $, we just need to prove
$$ (\gamma_0x-1)^i+u(\gamma_0x-1)^th(x)  \in \left \langle (\gamma_0x-1)^{p^s-i}+u(\gamma_0x-1)^{p^s+t-2i}h^{'}(x) \right \rangle.$$
That means we just need to find $ f_i(x)\in \mathbb{F}_{p^m}[x]\, ,\, i=1, 2\, $ such that
$$ (\gamma_0x-1)^i+u(\gamma_0x-1)^th(x)\equiv[\,f_1(x)+uf_2(x)\,][\,(\gamma_0x-1)^{p^s-i}+u(\gamma_0x-1)^{p^s+t-2i}h^{'}(x)\,](\textrm{mod}(\gamma_0x-1)^{p^s}).$$

(a) When $p^s \leqslant i+t$, let $$f_1(x)=(\gamma_0x-1)^{2i-p^s}\,, \,f_2(x)=(\gamma_0x-1)^{i-p^s+t}[h(x)-h^{'}(x)], $$ and we have
$$ (\gamma_0x-1)^i+u(\gamma_0x-1)^th(x)=[\,f_1(x)+uf_2(x)\,][\,(\gamma_0x-1)^{p^s-i}+u(\gamma_0x-1)^{p^s+t-2i}h^{'}(x)\,].$$ It follows that $C\subseteq C^{\perp_\sigma} $, i.e., $C$ is $\sigma$-self-orthogonal.

(b) When $i+t\leqslant p^s\leqslant 2i$, since $(\gamma_0x-1)^{p^s-i-t}\,|\,(h(x)-h^{'}(x))$, there exists $m(x)\in \mathbb{F}_{p^m}[x]$ such that $h(x)-h^{'}(x)=(\gamma_0x-1)^{p^s-i-t}m(x)$. Let $$f_1(x)=(\gamma_0x-1)^{2i-p^s}\,, \,f_2(x)=m(x).$$ Then $$(\gamma_0x-1)^i+u(\gamma_0x-1)^th(x)=[f_1(x)+uf_2(x)][(\gamma_0x-1)^{p^s-i}+u(\gamma_0x-1)^{p^s+t-2i}h^{'}(x)], $$ which implies that $C\subseteq C^{\perp_\sigma} $, i.e., $C$ is $\sigma$-self-orthogonal.
\end{proof}

\begin{remark} Under the conditions of Theorem \ref{th10}, there doesn't exist $\sigma$-self-dual codefor any given $\sigma \in Aut(R)$. In fact, assume that there exists a $\sigma$-self-dual code $C$, then $|\,C\,|=|\,C^{\perp_\sigma}\,|$. By Theorem \ref{th3}, we have $|\,C\,|=p^{2m(p^s-i)}$, $|\,C^{\perp_\sigma}\,|=p^{2mi}$, which means $p^{2m(p^s-i)}=p^{2mi}$. Hence, $2i=p^s$, which contradicts that $p$ is an odd prime.
\end{remark}
\begin{theorem}\label{th11}
Let $C$ be a $\gamma$-constacyclic code of length $p^s$ over $R$ and $$C=\left \langle \,(\gamma _0x-1)^i+u(\gamma _0x-1)^th(x) \,\right \rangle\subseteq \mathcal{R}_{\gamma}, $$ where $\gamma _0=\theta(\gamma _0^{-1})$, $h(x)$ is a unit where it can be represented $h(x)=\sum _{j=0}^{p^s-i-1}h_j(\gamma _0x-1)^j$, $h_j\in \mathbb F_{p^m}$, $h_0\neq 0$ and $i> \frac{p^{s}+t}{2}$. Then $C$ is $\sigma$-self-orthogonal if and only if one of the following holds:

(a) $p^s\leqslant i+t$;

(b) $p^s > i+t$ and $(\gamma_0x-1)^{p^s-i-t}\,|\,[h(x)-h^{'}(x)]$, where $$\widetilde{h}(x)=-\sum _{j=0}^{p^s-i-1}\varepsilon \theta(h_j)(-1)^{j+t-i}(\gamma_0x-1)^jx^{i-j-t}=\sum _{j}h_j^{'}(\gamma_0x-1)^j, $$
$$h^{'}(x)=\sum _{j=0}^{p^s-i-1}h_j^{'}(\gamma_0x-1)^j.$$
\end{theorem}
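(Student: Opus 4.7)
The plan is to adapt the argument of Theorem~\ref{th10} to the present case, where the main structural difference is that $C^{\perp_\sigma}$ is now a nonprincipal (Type~4) ideal, because the assumption $i>(p^s+t)/2$ places us into case~(3) of Theorem~\ref{th4} rather than case~(2). First I would apply Theorem~\ref{th4}(3) together with the hypothesis $\gamma_0=\theta(\gamma_0^{-1})$ to obtain
\[
C^{\perp_\sigma}=\bigl\langle\,(\gamma_0x-1)^{i-t}+u\,h'(x),\ u(\gamma_0x-1)^{p^s-i}\,\bigr\rangle\subseteq\mathcal R_\gamma,
\]
where the truncated polynomial $h'(x)$ replaces the full $(\gamma_0x-1)$-adic expansion $\widetilde h(x)$ inside the first generator; this replacement is legitimate because the discarded tail $u\sum_{j\ge p^s-i}h_j'(\gamma_0x-1)^j$ lies in the ideal generated by $u(\gamma_0x-1)^{p^s-i}$.

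Since $C$ is principal, the inclusion $C\subseteq C^{\perp_\sigma}$ amounts to the single membership
\[
(\gamma_0x-1)^i+u(\gamma_0x-1)^th(x)\equiv (f_1(x)+uf_2(x))\bigl[(\gamma_0x-1)^{i-t}+uh'(x)\bigr]+ug(x)(\gamma_0x-1)^{p^s-i}
\]
modulo $(\gamma_0x-1)^{p^s}$ for some $f_1,f_2,g\in\mathbb F_{p^m}[x]$. Comparing the $u^0$-parts forces $f_1(x)\equiv(\gamma_0x-1)^t\pmod{(\gamma_0x-1)^{p^s-i+t}}$, and I would push the resulting higher-order correction into the second generator, so that one may take $f_1(x)=(\gamma_0x-1)^t$ without loss of generality. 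The $u^1$-part then reduces to the key congruence
\[
(\gamma_0x-1)^t[h(x)-h'(x)]\equiv f_2(x)(\gamma_0x-1)^{i-t}+g(x)(\gamma_0x-1)^{p^s-i}\pmod{(\gamma_0x-1)^{p^s}}.
\]
The hypothesis $i>(p^s+t)/2$ gives $i-t>p^s-i$, so the right-hand side is automatically divisible by $(\gamma_0x-1)^{p^s-i}$, and the whole problem reduces to asking when the left-hand side is.

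For necessity I would use the degree bound $\deg_{(\gamma_0x-1)}(h-h')\le p^s-i-1$: in case~(a), $p^s\le i+t$ means the factor $(\gamma_0x-1)^t$ already supplies the required divisibility, so the congruence is solvable trivially; in case~(b), $p^s>i+t$ gives $t<p^s-i$, and matching the coefficients of $(\gamma_0x-1)^j$ for $t\le j\le p^s-i-1$ on both sides forces $(\gamma_0x-1)^{p^s-i-t}\mid (h(x)-h'(x))$. For sufficiency I would exhibit explicit witnesses: in case~(a), take $f_2=0$ and $g(x)=(\gamma_0x-1)^{t+i-p^s}(h(x)-h'(x))$; in case~(b), write $h(x)-h'(x)=(\gamma_0x-1)^{p^s-i-t}m(x)$ and take $f_2=0$, $g(x)=m(x)$. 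Substituting back verifies the congruence directly. The main obstacle I expect is the bookkeeping in rewriting $C^{\perp_\sigma}$ into its most convenient normalized form and cleanly separating the degree constraints on the two generators of a Type~4 ideal; once that normalization is in hand, the rest of the argument parallels Theorem~\ref{th10} with only the extra term $ug(x)(\gamma_0x-1)^{p^s-i}$ to track.
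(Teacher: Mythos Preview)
Your proposal is correct and follows essentially the same route as the paper's proof: identify $C^{\perp_\sigma}$ via Theorem~\ref{th4}(3), normalize its first generator by replacing $\widetilde h$ with the truncation $h'$, reduce $f_1$ to $(\gamma_0x-1)^t$ by absorbing the correction into the second generator, extract the key congruence on the $u$-part, and then use $i-t>p^s-i$ together with degree bounds to obtain the divisibility condition. Your explicit witnesses for sufficiency (with $f_2=0$) coincide with the paper's choices $f_1=(\gamma_0x-1)^t$ and $g_1=(\gamma_0x-1)^{t+i-p^s}(h-h')$ or $g_1=m(x)$.
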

\begin{proof}
By Theorem \ref{th4} and $\gamma _0=\theta(\gamma _0^{-1})$,
\begin{displaymath}
C^{\perp_\sigma}=\left \langle (\gamma_0x-1)^{i-t}-u\sum _{j=0}^{p^s-i-1}\varepsilon \theta(h_j)(-1)^{j+t-i}(\gamma_0x-1)^jx^{i-j-t}, u(\gamma_0x-1)^{p^s-i} \right \rangle.
\end{displaymath}
Let $$\widetilde{h}(x)=-\sum _{j=0}^{p^s-i-1}\varepsilon \theta(h_j)(-1)^{j+t-i}(\gamma_0x-1)^jx^{i-j-t}=\sum _{j}h_j^{'}(\gamma_0x-1)^j, $$
$$h^{'}(x)=\sum _{j=0}^{p^s-i-1}h_j^{'}(\gamma_0x-1)^j.$$ Then $$C^{\perp_\sigma}=\left \langle \,(\gamma_0x-1)^{i-t}+uh^{'}(x)\,,\, u(\gamma_0x-1)^{p^s-i}\, \right \rangle \subseteq \mathcal{R}_\gamma.$$

Necessity. Because $C\subseteq C^{\perp_\sigma}\, $, there exist $f_1(x)+uf_2(x)\,,\, g_1(x)+ug_2(x)\,,\, f_i(x)\,, \,g_i(x)\in \mathbb{F}_{p^m}[x]\, ,\, i=1, 2 $ such that
\begin{equation}
\begin{split}
&(\gamma_0x-1)^i+u(\gamma_0x-1)^th(x)\\
\equiv &\,[f_1(x)+uf_2(x)][(\gamma_0x-1)^{i-t}+uh^{'}(x)]+[g_1(x)+ug_2(x)]u(\gamma_0x-1)^{p^s-i}\\\label{e3}
\equiv &\,f_1(x)(\gamma_0x-1)^{i-t}+u[f_2(x)(\gamma_0x-1)^{i-t}+f_1(x)h^{'}(x)+g_1(x)(\gamma_0x-1)^{p^s-i}](\textrm{mod}(\gamma_0x-1)^{p^s}).
\end{split}
\end{equation}

Next, we will prove that we can find $ f_i(x)\,, \,g_i(x)\in \mathbb{F}_{p^m}[x]\, ,\, i=1, 2\, $  satisfying Equation (\ref{e3}) and
$$f_1(x)=(\gamma_0x-1)^{t}\,,\,\deg f_2(x)\leqslant p^s-i+t+1\,,\,\deg g_1(x)\leqslant i-1\,,\,g_2(x)=0.$$

So $(\gamma_0x-1)^i \,\equiv \,(\gamma_0x-1)^{i-t}f_1(x)(\textrm{mod}(\gamma_0x-1)^{p^s})$, which means that there exists $r(x)\in \mathbb{F}_{p^m}[x]$ such that $$(\gamma_0x-1)^{i-t}f_1(x)-(\gamma_0x-1)^i=r(x)(\gamma_0x-1)^{p^s}.$$
Hence, $f_1(x)=(\gamma_0x-1)^t+(\gamma_0x-1)^{p^s-i+t}r(x)\,.$ Let
$$f_1^{'}(x)=(\gamma_0x-1)^t\,, \,g^{'}(x)=g_1(x)+(\gamma_0x-1)^tr(x)h^{'}(x)$$
and write $f_2(x)$, $g^{'}(x)$ as
$$f_2(x)=\sum _jd_j(\gamma_0x-1)^j\,, \,g^{'}(x)=\sum _je_j(\gamma_0x-1)^j\,, \,d_j, e_j\in \mathbb F_{p^m}.$$
Let $f_2^{'}(x)=\sum _{j=0}^{p^s-i+t+1}d_j(\gamma_0x-1)^j$, $g^{''}(x)=\sum _{j=0}^{i-1}e_j(\gamma_0x-1)^j$. We have
\begin{displaymath}
\begin{split}
&f_2(x)(\gamma_0x-1)^{i-t}+f_1(x)h^{'}(x)+g_1(x)(\gamma_0x-1)^{p^s-i}\\
\equiv &\,f_2(x)(\gamma_0x-1)^{i-t}+(\gamma_0x-1)^th^{'}(x)+(\gamma_0x-1)^{p^s-i+t}r(x)h^{'}(x)+g_1(x)(\gamma_0x-1)^{p^s-i}\\
\equiv &\,f_2(x)(\gamma_0x-1)^{i-t}+f_1^{'}(x)h^{'}(x)+g^{'}(x)(\gamma_0x-1)^{p^s-i}\\
\equiv &\,f_2^{'}(x)(\gamma_0x-1)^{i-t}+f_1^{'}(x)h^{'}(x)+g^{''}(x)(\gamma_0x-1)^{p^s-i}(\textrm{mod}(\gamma_0x-1)^{p^s}).
\end{split}
\end{displaymath}
Therefore, we can write Equation (\ref{e3}) as
\begin{displaymath}
\begin{split}
&(\gamma_0x-1)^i+u(\gamma_0x-1)^th(x)\\
\equiv &\,[f_1^{'}(x)+uf_2^{'}(x)][(\gamma_0x-1)^{i-t}+uh^{'}(x)]+ug^{''}(x)(\gamma_0x-1)^{p^s-i}\\
\equiv &\,(\gamma_0x-1)^i+u[f_2^{'}(x)(\gamma_0x-1)^{i-t}+(\gamma_0x-1)^th^{'}(x)+g^{''}(x)(\gamma_0x-1)^{p^s-i}](\textrm{mod}(\gamma_0x-1)^{p^s}).
\end{split}
\end{displaymath}
where $f_1^{'}(x)=(\gamma_0x-1)^{t}\,,\,\deg f_2^{'}(x)\leqslant p^s-i+t+1\,,\,\deg g^{''}(x)\leqslant i-1\,.$
It means that $$(\gamma_0x-1)^th(x) \,\equiv\, f_2^{'}(x)(\gamma_0x-1)^{i-t}+(\gamma_0x-1)^th^{'}(x)+g^{''}(x)(\gamma_0x-1)^{p^s-i}(\textrm{mod}(\gamma_0x-1)^{p^s}).$$
From $i> \frac{p^{s}+t}{2}$, $p^s-i< i-t$. It follows that
\begin{displaymath}
\begin{split}
(\gamma_0x-1)^t[\,h(x)-h^{'}(x)\,] &\equiv\, f_2^{'}(x)(\gamma_0x-1)^{i-t}+g^{''}(x)(\gamma_0x-1)^{p^s-i}\\
&\equiv\, (\gamma_0x-1)^{p^s-i}[g^{''}(x)+f_2^{'}(x)(\gamma_0x-1)^{2i-t-p^s}](\textrm{mod}(\gamma_0x-1)^{p^s}).
\end{split}
\end{displaymath}
If $p^s > i+t$, i.e., $p^s-i> t$, we have
$$(\gamma_0x-1)^t[\,h(x)-h^{'}(x)\,] \equiv \,(\gamma_0x-1)^t(\gamma_0x-1)^{p^s-i-t}[\,g^{''}(x)+f_2^{'}(x)(\gamma_0x-1)^{2i-t-p^s}\,](\textrm{mod}(\gamma_0x-1)^{p^s}), $$ which yields that there exist $s(x)\in \mathbb{F}_{p^m}[x]$ such that
$$(\gamma_0x-1)^t\left \{ h(x)-h^{'}(x)-(\gamma_0x-1)^{p^s-i-t}[\,g^{''}(x)+f_2^{'}(x)(\gamma_0x-1)^{2i-t-p^s}\,] \right \}=s(x)(\gamma_0x-1)^{p^s}. $$
Hence,\begin{equation}
h(x)-h^{'}(x)-(\gamma_0x-1)^{p^s-i-t}[\,g^{''}(x)+f_2^{'}(x)(\gamma_0x-1)^{2i-t-p^s}\,]=s(x)(\gamma_0x-1)^{p^s-t}. \label{e4}
\end{equation}
Note that $\deg h(x)\leqslant p^s-i-1$, $\deg h^{'}(x)\leqslant p^s-i-1$, $\deg f_2^{'}(x)\leqslant p^s-i+t+1$, $\deg g^{'}(x)\leqslant i-1$
and we have the degree of the left of Equation (\ref{e4}) $\leqslant \max\left \{ p^s-i-1, p^s-t-1 \right \}=p^s-t-1\,.$ Compare the the degrees of two sides of Equation (\ref{e4}) and we have $s(x)=0\,, $ which follows that
$$ h(x)-h^{'}(x)=(\gamma_0x-1)^{p^s-i-t}[g^{''}(x)+f_2^{'}(x)(\gamma_0x-1)^{2i-t-p^s}], $$
i.e.,
$$(\gamma_0x-1)^{p^s-i-t}|(h(x)-h^{'}(x)).$$

Sufficiency.
In order to prove $C\subseteq C^{\perp_\sigma} $, we just need to prove
$$(\gamma_0x-1)^i+u(\gamma_0x-1)^th(x)\in \left \langle (\gamma_0x-1)^{i-t}-uh^{'}(x), u(\gamma_0x-1)^{p^s-i} \right \rangle.$$
That means we just need to find $ f_i(x)\,, \,g_i(x)\in \mathbb{F}_{p^m}[x]\, ,\, i=1, 2\, $ such that
\begin{displaymath}
\begin{split}
&(\gamma_0x-1)^i+u(\gamma_0x-1)^th(x)\\
\equiv\,& (f_1(x)+uf_2(x))[\,(\gamma_0x-1)^{i-t}+uh^{'}(x)\,]+(g_1(x)+ug_2(x))u(\gamma_0x-1)^{p^s-i}(\textrm{mod}(\gamma_0x-1)^{p^s}).
\end{split}
\end{displaymath}

(a) When $p^s \leqslant i+t$, i.e., $p^s-i\leqslant  t$, let
$$f_1(x)=(\gamma_0x-1)^t\,, \,g_1(x)=(\gamma_0x-1)^{t+i-p^s}[\,h(x)-h^{'}(x)\,], $$
and we have $$(\gamma_0x-1)^i+u(\gamma_0x-1)^th(x)\equiv f_1(x)[\,(\gamma_0x-1)^{i-t}+uh^{'}(x)\,]+g_1(x)u(\gamma_0x-1)^{p^s-i}(\textrm{mod}(\gamma_0x-1)^{p^s}).$$ It follows that $C\subseteq C^{\perp_\sigma} $, i.e., $C$ is $\sigma$-self-orthogonal.

(b) When $p^s > i+t$, since $(\gamma_0x-1)^{p^s-i-t}\,|\,(h(x)-h^{'}(x))$, there exists $m(x)\in \mathbb{F}_{p^m}[x]$ such that $h(x)-h^{'}(x)=(\gamma_0x-1)^{p^s-i-t}m(x)$. Let$$f_1(x)=(\gamma_0x-1)^t\,, \,g_1(x)=m(x)\,, $$ and we have
$$(\gamma_0x-1)^i+u(\gamma_0x-1)^th(x)\,\equiv\, f_1(x)[\,(\gamma_0x-1)^{i-t}+uh^{'}(x)\,]+g_1(x)u(\gamma_0x-1)^{p^s-i}(\textrm{mod}(\gamma_0x-1)^{p^s}), $$ which implies that $C\subseteq C^{\perp_\sigma} $, i.e., $C$  is $\sigma$-self-orthogonal.

\end{proof}

\begin{remark} Under the conditions of Theorem \ref{th11}, there doesn't exist $\sigma$-self-dual code for any given $\sigma \in Aut(R)$. In fact, if  $C$ is a Type 3 code, then $C^{\perp_\sigma} $ is a Type 4 code by Theorem \ref{th4}. Therefore, $C$ is not $\sigma$-self-dual.
\end{remark}

\begin{theorem}\label{th12}
Let $C$ be a $\gamma$-constacyclic code of length $p^s$ over $R$ and $$C=\left \langle \,(\gamma _0x-1)^i,  u(\gamma _0x-1)^\omega \,\right \rangle\subseteq \mathcal{R}_{\gamma},$$ where $\gamma _0=\theta(\gamma _0^{-1})$, $1\leqslant i\leqslant p^s-1$, $\omega < T\,.$ Then

(1) $C$ is $\sigma$-self-orthogonal if and only if $\omega +i\geqslant  p^s$,

(2) $C$ is $\sigma$-self-dual if and only if and $\omega +i= p^s$.
\end{theorem}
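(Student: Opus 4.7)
The plan is to reduce the two claims to a routine structural comparison between $C$ and the explicit description of $C^{\perp_\sigma}$ supplied by Theorem \ref{th4}, supplemented by the counting formula in Proposition \ref{pro1} and Theorem \ref{th3}. Since $h(x)=0$ and $\gamma_0=\theta(\gamma_0^{-1})$ is forced by Lemma \ref{le3} (or may be assumed throughout as in the statement), the relevant part of Theorem \ref{th4} gives
\[
C^{\perp_\sigma} \;=\; \bigl\langle (\gamma_0 x-1)^{p^s-\omega},\, u(\gamma_0 x-1)^{p^s-i}\bigr\rangle \;\subseteq\; \mathcal{R}_\gamma.
\]
This is the only structural input I need, and it turns both statements into elementary exponent inequalities.

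For the sufficiency in (1), assume $\omega+i\geq p^s$, so that $p^s-\omega\leq i$ and $p^s-i\leq \omega$. Then
\[
(\gamma_0 x-1)^i=(\gamma_0 x-1)^{i-(p^s-\omega)}(\gamma_0 x-1)^{p^s-\omega}\in C^{\perp_\sigma},
\]
and similarly $u(\gamma_0 x-1)^\omega = (\gamma_0 x-1)^{\omega-(p^s-i)}\cdot u(\gamma_0 x-1)^{p^s-i}\in C^{\perp_\sigma}$. Since the two generators of $C$ both land in $C^{\perp_\sigma}$, one has $C\subseteq C^{\perp_\sigma}$, i.e. $C$ is $\sigma$-self-orthogonal.

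For the necessity in (1), I would use a cardinality argument rather than tracking generators, which is the cleanest route. By Theorem \ref{th3}, $|C|=p^{m(2p^s-i-\omega)}$, and by Proposition \ref{pro1}, $|C|\cdot|C^{\perp_\sigma}|=|R|^{p^s}=p^{2mp^s}$, so $|C^{\perp_\sigma}|=p^{m(i+\omega)}$. If $C\subseteq C^{\perp_\sigma}$, then $|C|\leq |C^{\perp_\sigma}|$, which gives $2p^s-i-\omega\leq i+\omega$, i.e.\ $\omega+i\geq p^s$. (As a sanity check, one could alternatively verify necessity generator-by-generator: reducing modulo $u$ shows $(\gamma_0 x-1)^i\in C^{\perp_\sigma}$ forces $p^s-\omega\leq i$; but the counting argument is faster and avoids any case analysis.)

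For (2), combine part (1) with the same cardinality formulas: $C=C^{\perp_\sigma}$ requires both $C\subseteq C^{\perp_\sigma}$ and $|C|=|C^{\perp_\sigma}|$, i.e.\ $\omega+i\geq p^s$ and $2p^s-i-\omega=i+\omega$. The equality collapses to $\omega+i=p^s$, and conversely $\omega+i=p^s$ implies $\sigma$-self-orthogonality by (1) together with equal cardinalities, hence equality. There is no real obstacle in this argument; the only thing to be careful about is invoking $\gamma_0=\theta(\gamma_0^{-1})$ so that $C^{\perp_\sigma}$ lives in $\mathcal{R}_\gamma$ and the exponent comparison makes sense inside a single chain-of-ideals picture.
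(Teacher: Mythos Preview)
Your proof is correct. The sufficiency direction of (1) matches the paper's argument exactly. The difference lies in the necessity arguments: for (1) the paper argues directly from the inclusion $C\subseteq C^{\perp_\sigma}$ that $(\gamma_0x-1)^{p^s-\omega}$ must divide $(\gamma_0x-1)^i$ (implicitly via the residue code, which lives in a chain ring), yielding $p^s-\omega\leq i$; for (2) the paper reads off $p^s-\omega=i$ by matching the two explicit generating sets. You instead route both necessity steps through the cardinality formulas from Theorem~\ref{th3} and Proposition~\ref{pro1}. Your approach is slightly more economical in that it bypasses any discussion of how membership in a Type~4 ideal constrains exponents, at the cost of invoking the extra counting results; the paper's approach is more self-contained to this theorem but requires the reader to accept the divisibility step inside $\mathcal{R}_\gamma$. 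Either is perfectly adequate here.
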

\begin{proof}
(1) By Theorem \ref{th4} and $\gamma _0=\theta(\gamma _0^{-1})$,
$$C^{\perp_\sigma}=\left \langle \,(\gamma_0x-1)^{p^s-\omega }, u (\gamma_0x-1)^{p^s-i }\,\right \rangle\subseteq \mathcal{R}_{\gamma}.$$
Necessity. Since $C\subseteq C^{\perp_\sigma} $, $(\gamma_0x-1)^{p^s-\omega }\,|\,(\gamma_0x-1)^{i}$, which follows that $p^s-\omega\leqslant i$, i.e., $\omega +i\geqslant  p^s\,.$

Sufficiency. If $\omega +i\geqslant  p^s$, then $p^s-\omega\leqslant i$, $p^s-i\leqslant \omega$, which means that $(\gamma_0x-1)^{i}\in C$, $u(\gamma_0x-1)^\omega\in C\,.$ Therefore, $C\subseteq C^{\perp_\sigma}\, $, i.e., $C$ is $\sigma$-self-orthogonal.

(2)
Necessity. Because $C = C^{\perp_\sigma} $, $p^s-\omega=i$, $\omega=p^s-i$, i.e., $\omega +i= p^s\,.$

Sufficiency. When $\omega +i= p^s$, $$C=\left \langle (\gamma_0x-1)^i,  u(\gamma_0x-1)^\omega \right \rangle\subseteq \mathcal{R}_{\gamma}, $$
$$C^{\perp_\sigma}=\left \langle (\gamma_0x-1)^{p^s-\omega }, u (\gamma_0x-1)^{p^s-i }\right \rangle=\left \langle (\gamma_0x-1)^i,  u(\gamma_0x-1)^\omega \right \rangle \subseteq \mathcal{R}_{\gamma}.$$
It implies that $C = C^{\perp_\sigma} $, i.e., $C$ is $\sigma$-self-dual.
\end{proof}

\begin{theorem}\label{th13}
Let $C$ be a $\gamma$-constacyclic code of length $p^s$ over $R$ and
$$C=\left \langle (\gamma _0x-1)^i+u(\gamma _0x-1)^th(x),  u(\gamma _0x-1)^\omega \right \rangle\subseteq \mathcal{R}_{\gamma}, $$   where $\gamma _0=\theta(\gamma _0^{-1})$, $1\leqslant i\leqslant p^s-1$, $\omega < T$, $h(x)$is a unit where it can be represented as $h(x)=\sum _{j=0}^{\omega-t-1}h_j(\gamma _0x-1)^j$, $h_j\in \mathbb F_{p^m}$, $h_0\neq 0$. Then $C$ is $\sigma$-self-orthogonal if and only if one of the following holds:

(a) $p^s \leqslant i+t$,

(b) $i+t< p^s\leqslant i+\omega $ and $(\gamma_0x-1)^{p^s-i-t}\,|\,[\,h(x)-h^{'}(x)\,]$,  where $$h^{'}(x)=-\sum _{j=0}^{\omega-t-1}h_j(-\gamma )^{i-t+j}(\gamma_0x-1)^jx^{i-j-t}.$$
\end{theorem}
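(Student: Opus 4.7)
The plan is to follow the same strategy as in the proof of Theorem \ref{th11}, since here $C$ is a Type 4 ideal with unit $h(x)$ and its $\sigma$-dual is again a Type 4 ideal explicitly provided by Theorem \ref{th4}. First, applying Theorem \ref{th4} together with the hypothesis $\gamma_0 = \theta(\gamma_0^{-1})$, I would write
$$C^{\perp_\sigma} = \left\langle (\gamma_0 x-1)^{p^s-\omega} - u(\gamma_0 x-1)^{p^s-i-\omega+t}\,h'(x),\; u(\gamma_0 x-1)^{p^s-i}\right\rangle \subseteq \mathcal{R}_\gamma,$$
where $h'(x)$ is identified with the polynomial appearing in the theorem statement (with the conventions of Theorem \ref{th4}, possibly modulo a harmless truncation of the higher-degree tail).

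Necessity. Assume $C\subseteq C^{\perp_\sigma}$, so both generators of $C$ lie in $C^{\perp_\sigma}$. The membership $u(\gamma_0 x-1)^\omega \in C^{\perp_\sigma}$ immediately forces $p^s - i \le \omega$. For the first generator, write
$$(\gamma_0 x-1)^i + u(\gamma_0 x-1)^t h(x) \equiv [f_1(x)+uf_2(x)]\,d(x) + [g_1(x)+ug_2(x)]\,u(\gamma_0 x-1)^{p^s-i} \pmod{(\gamma_0 x-1)^{p^s}},$$
where $d(x)$ denotes the first generator of $C^{\perp_\sigma}$. Matching the $u^0$-part yields $f_1(x)=(\gamma_0x-1)^{\omega+i-p^s}$ after absorbing all $(\gamma_0x-1)^\omega$-multiples into $g_1$, just as in the proof of Theorem \ref{th11}. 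Substituting back and matching the $u^1$-part produces an identity of the form
$$(\gamma_0 x-1)^t[h(x)-h'(x)] \equiv f_2(x)(\gamma_0 x-1)^{p^s-\omega} + g_1(x)(\gamma_0 x-1)^{p^s-i} \pmod{(\gamma_0 x-1)^{p^s}}.$$
If $p^s\le i+t$ this congruence is automatic, since the left-hand side is already a multiple of $(\gamma_0 x-1)^{p^s-i}$ through its factor $(\gamma_0x-1)^t$. If $i+t<p^s$, cancelling the common factor $(\gamma_0 x-1)^t$ and using the degree bounds $\deg h, \deg h'\le \omega-t-1$ and $\deg g_1\le i-1$, a degree comparison forces the $(\gamma_0 x-1)^{p^s-t}$-remainder to vanish, yielding $(\gamma_0 x-1)^{p^s-i-t}\mid [h(x)-h'(x)]$.

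Sufficiency. In case (a), since $t\ge p^s-i$, the term $u(\gamma_0 x-1)^t h(x)$ already lies in $\langle u(\gamma_0x-1)^{p^s-i}\rangle\subseteq C^{\perp_\sigma}$, so choosing $f_1(x)=(\gamma_0 x-1)^{\omega+i-p^s}$ and collecting the residual $u$-terms into an appropriate $g_1(x)$ shows that the first generator of $C$ lies in $C^{\perp_\sigma}$; the second generator $u(\gamma_0 x-1)^\omega$ is in $C^{\perp_\sigma}$ because $\omega\ge t+1>p^s-i$. In case (b), I would use the divisibility $h(x)-h'(x) = (\gamma_0 x-1)^{p^s-i-t}m(x)$ to set $f_1(x)=(\gamma_0 x-1)^{\omega+i-p^s}$, $g_1(x)=m(x)$ and $f_2=g_2=0$; a direct computation modulo $(\gamma_0 x-1)^{p^s}$ yields the required identity.

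The main obstacle will be the degree-counting step in case (b) of necessity: one must verify that after cancelling $(\gamma_0 x-1)^t$ from the modular identity no $(\gamma_0 x-1)^{p^s-t}$-multiple survives on the right, which requires a careful accounting of $\deg h$, $\deg h'$ and $\deg g_1$ exactly as in Theorem \ref{th11}. Once this step is carried out, the two cases (a) and (b) drop out cleanly from the same set-up.
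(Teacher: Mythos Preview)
Your approach is essentially the same as the paper's: both compute $C^{\perp_\sigma}$ via Theorem \ref{th4}, express membership of the first generator as a congruence, normalize $f_1(x)=(\gamma_0x-1)^{i+\omega-p^s}$, and finish case (b) of necessity by a degree comparison. Your derivation of $p^s\le i+\omega$ from the \emph{second} generator (i.e.\ $(\gamma_0x-1)^\omega\in Tor(C^{\perp_\sigma})=\langle(\gamma_0x-1)^{p^s-i}\rangle$) is a slightly cleaner shortcut than the paper's contradiction argument via the $u^0$-part; one small caution is that your claimed bound $\deg h'\le\omega-t-1$ is too optimistic---each summand $(\gamma_0x-1)^j x^{i-j-t}$ has degree $i-t$, so only $\deg h'\le i-t$ holds---but since $i-t<p^s-t$ the degree comparison still goes through.
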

\begin{proof}
By Theorem \ref{th4} and $\gamma _0=\theta(\gamma _0^{-1})$,
\begin{displaymath}
C^{\perp_\sigma}=\left \langle (\gamma_0x-1)^{p^s-\omega }-u(\gamma_0x-1)^{p^s-i-\omega+t }\sum _{j=0}^{\omega -t-1}\varepsilon \theta(h_j)(-1)^{j+t-i}(\gamma_0x-1)^jx^{i-j-t}, u (\gamma_0x-1)^{p^s-i }\right \rangle.
\end{displaymath}
Let $h^{'}(x)=-\sum _{j=0}^{\omega -t-1}\varepsilon \theta(h_j)(-1)^{j+t-i}(\gamma_0x-1)^jx^{i-j-t}.$ Then
$$C^{\perp_\sigma}=\left \langle (\gamma_0x-1)^{p^s-\omega }+u(\gamma_0x-1)^{p^s-i-\omega+t }h^{'}(x), u (\gamma_0x-1)^{p^s-i }\right \rangle\subseteq \mathcal{R}_{\gamma}.$$

Necessity. $C\subseteq C^{\perp_\sigma} $ implies that $(\gamma_0x-1)^i+u(\gamma_0x-1)^th(x)\in C^{\perp_\sigma}\,.$ It means that there exist $f_1(x)+uf_2(x)\,, \,g_1(x)+ug_2(x)\,, \,f_i(x)\,,\, g_i(x)\in \mathbb{F}_{p^m}[x] \,,\, i=1, 2 $  such that
\begin{equation}
\begin{split}
&(\gamma_0x-1)^i+u(\gamma_0x-1)^th(x)\\
\equiv &\,[f_1(x)+uf_2(x)][(\gamma_0x-1)^{p^s-\omega }+u(\gamma_0x-1)^{p^s-i-\omega+t }h^{'}(x)]+[g_1(x)+ug_2(x)]u(\gamma_0x-1)^{p^s-i}\\
\equiv &\,f_1(x)(\gamma_0x-1)^{p^s-\omega}+u[f_2(x)(\gamma_0x-1)^{p^s-\omega}+f_1(x)(\gamma_0x-1)^{p^s-i-\omega+t }h^{'}(x)+g_1(x)(\gamma_0x-1)^{p^s-i}]\\
&(\textrm{mod}(\gamma_0x-1)^{p^s}).\label{e5}
\end{split}
\end{equation}
Hence, $(\gamma_0x-1)^i \equiv \,(\gamma_0x-1)^{p^s-\omega}f_1(x)(\textrm{mod}(\gamma_0x-1)^{p^s})$ which implies that $r(x)\in \mathbb{F}_{p^m}[x]$ such that $$(\gamma_0x-1)^{p^s-\omega}f_1(x)-(\gamma_0x-1)^i=r(x)(\gamma_0x-1)^{p^s}.$$

Here, we will prove that $p^s-\omega\leqslant i$. Suppose $p^s-\omega>i$, i.e., $p^s-\omega-i> 0$, then we have
$$(\gamma_0x-1)^{p^s-\omega-i}f_1(x)-1=r(x)(\gamma_0x-1)^{p^s-i}, $$ i.e., $$(\gamma_0x-1)^{p^s-\omega-i}[\,f_1(x)-r(x)(\gamma_0x-1)^{\omega}\,]=1, $$ which yields that $(\gamma_0x-1)^{p^s-\omega-i}\,|\,1$, proving $p^s-\omega-i=0$. That contradicts that $p^s-\omega-i> 0$. Thus, $p^s-\omega\leqslant i$.

Next, we will find $ f_i(x)\,, \,g_i(x)\in \mathbb{F}_{p^m}[x]\, ,\, i=1, 2\, $  satisfying Equation (\ref{e5}) and
$$f_1(x)=(\gamma_0x-1)^{i+\omega -p^s}\,,\,\deg f_2(x)\leqslant \omega -1\,,\,\deg g_1(x)\leqslant i-1\,,\,g_2(x)=0.$$

So $f_1(x)=(\gamma_0x-1)^{i+\omega -p^s}+(\gamma_0x-1)^{\omega}r(x)\,.$  Let
$$f_1^{'}(x)=(\gamma_0x-1)^{i+\omega -p^s}\,, \,g^{'}(x)=g_1(x)+(\gamma_0x-1)^tr(x)h^{'}(x), $$
Write $f_2(x)$, $g^{'}(x)$ as
$$f_2(x)=\sum _jd_j(\gamma_0x-1)^j\,, \,g^{'}(x)=\sum _je_j(\gamma_0x-1)^j\,, \,d_j, e_j\in \mathbb F_{p^m}.$$
Let $$f_2^{'}(x)=\sum _{j=0}^{\omega -1}d_j(\gamma_0x-1)^j\,, \,g^{''}(x)=\sum _{j=0}^{i-1}e_j(\gamma_0x-1)^j\,.$$
Therefore, we can write Equation (\ref{e5}) as
\begin{displaymath}
\begin{split}
&(\gamma_0x-1)^i+u(\gamma_0x-1)^th(x)\\
\equiv &\,f_1(x)(\gamma_0x-1)^{p^s-\omega}+u[f_2(x)(\gamma_0x-1)^{p^s-\omega}+f_1(x)(\gamma_0x-1)^{p^s-i-\omega+t }h^{'}(x)+g_1(x)(\gamma_0x-1)^{p^s-i}]\\
\equiv &\,f_1^{'}(x)(\gamma_0x-1)^{p^s-\omega}+u[f_2^{'}(x)(\gamma_0x-1)^{p^s-\omega}+f_1^{'}(x)(\gamma_0x-1)^{p^s-i-\omega+t }h^{'}(x)+g^{''}(x)(\gamma_0x-1)^{p^s-i}]\\
\equiv &\,[f_1^{'}(x)+uf_2^{'}(x)][(\gamma_0x-1)^{p^s-\omega }+u(\gamma_0x-1)^{p^s-i-\omega+t }h^{'}(x)]+ug^{''}(x)(\gamma_0x-1)^{p^s-i}\\
\equiv &\,(\gamma_0x-1)^i+u[f_2^{'}(x)(\gamma_0x-1)^{p^s-\omega}+(\gamma_0x-1)^th^{'}(x)+g^{''}(x)(\gamma_0x-1)^{p^s-i}](\textrm{mod}(\gamma_0x-1)^{p^s}).
\end{split}
\end{displaymath}
 where $f_1^{'}(x)=(\gamma_0x-1)^{i+\omega -p^s}\,,\,\deg f_2^{'}(x)\leqslant \omega -1\,,\,\deg g^{''}(x)\leqslant i-1\,.$
It is obvious that $$(\gamma_0x-1)^th(x) \equiv f_2^{'}(x)(\gamma_0x-1)^{p^s-\omega}+(\gamma_0x-1)^th^{'}(x)+g^{''}(x)(\gamma_0x-1)^{p^s-i}(\textrm{mod}(\gamma_0x-1)^{p^s}, $$
which implies that $$(\gamma_0x-1)^t[h(x)-h^{'}(x)] \equiv (\gamma_0x-1)^{p^s-i}[f_2^{'}(x)(\gamma_0x-1)^{i-\omega}+g^{''}(x)](\textrm{mod}(\gamma_0x-1)^{p^s}).$$
It follows that there exists $s(x)\in \mathbb{F}_{p^m}[x]$ such that $$(\gamma_0x-1)^t[h(x)-h^{'}(x)]-(\gamma_0x-1)^{p^s-i}[f_2^{'}(x)(\gamma_0x-1)^{i-\omega}+g^{''}(x)]=s(x)(\gamma_0x-1)^{p^s}.$$
If $p^s> i+t$, then $p^s-i> t$. We have
\begin{equation}
h(x)-h^{'}(x)-(\gamma_0x-1)^{p^s-i-t}[f_2^{'}(x)(\gamma_0x-1)^{i-\omega}+g^{''}(x)]=s(x)(\gamma_0x-1)^{p^s-t}.\label{e6}
\end{equation}
Note that $\deg h(x)\leqslant \omega -t-1< p^s-t$, $\deg h^{'}(x)\leqslant i-t< p^s-t $, $\deg f_2^{'}(x)\leqslant \omega -1$, $\deg g^{''}(x)\leqslant i-1$, and we have the degree of the left of Equation (\ref{e6}) $< p^s-t$. Compare the the degrees of two sides of Equation (\ref{e6}) and we have $s(x)=0\,,$ which follows that $$h(x)-h^{'}(x)=(\gamma_0x-1)^{p^s-i-t}[\,f_2^{'}(x)(\gamma_0x-1)^{i-\omega}+g^{''}(x)\,].$$ This leads to $(\gamma_0x-1)^{p^s-i-t}\,|\,[\,h(x)-h^{'}(x)\,]$.

Sufficiency.
Since $p^s\leqslant i+\omega $, i.e., $p^s-i\leqslant \omega $, $u(\gamma_0x-1)^\omega\in C^{\perp_\sigma} $.

(a) If $p^s\leqslant i+t $, let $g_1(x)=(\gamma_0x-1)^{t+i-p^s}$, $f_1(x)=(\gamma_0x-1)^{\omega +i-p^s}$ and we have
\begin{displaymath}
\begin{split}
&(\gamma_0x-1)^i+u(\gamma_0x-1)^th(x)\\
\equiv\,& f_1(x)[(\gamma_0x-1)^{p^s-\omega }+u(\gamma_0x-1)^{p^s-i-\omega+t }h^{'}(x)]+g_1(x)u(\gamma_0x-1)^{p^s-i}(\textrm{mod}(\gamma_0x-1)^{p^s}).
\end{split}
\end{displaymath}
 It follows that $C\subseteq C^{\perp_\sigma} $, i.e., $C$ is $\sigma$-self-orthogonal.

(b) If $i+t<  p^s\leqslant i+\omega $, since $(\gamma_0x-1)^{p^s-i-t}|h(x)-h^{'}(x)$, there exists $m(x)\in \mathbb{F}_{p^m}[x]$ such that $$h(x)-h^{'}(x)=(\gamma_0x-1)^{p^s-i-t}m(x).$$  Let $f_1(x)=(\gamma_0x-1)^{\omega +i-p^s}$, $g_1(x)=m(x)$ and we have
\begin{displaymath}
\begin{split}
&(\gamma_0x-1)^i+u(\gamma_0x-1)^th(x)\\
\equiv &\,f_1(x)[(\gamma_0x-1)^{p^s-\omega }+u(\gamma_0x-1)^{p^s-i-\omega+t }h^{'}(x)]+g_1(x)u(\gamma_0x-1)^{p^s-i}(\textrm{mod}(\gamma_0x-1)^{p^s}),
\end{split}
\end{displaymath}
 which implies that $C\subseteq C^{\perp_\sigma} $, i.e., $C$  is $\sigma$-self-orthogonal.
\end{proof}

\begin{corollary}
Under the same conditions as Theorem \ref{th13}, then $C$ is $\sigma$-self-dual if and only if $\omega +i= p^s$, and $(\gamma_0x-1)^{p^s-i-t}\,|\,[\,h(x)-h^{'}(x)\,]$, where $h^{'}(x)$ is same as Theorem \ref{th13}.
\end{corollary}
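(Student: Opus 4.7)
The plan is to read off the corollary as a combination of the $\sigma$-self-orthogonality criterion of Theorem \ref{th13} and the cardinality count of Theorem \ref{th3}, using the fact that $C = C^{\perp_\sigma}$ is equivalent to the pair of conditions $C \subseteq C^{\perp_\sigma}$ and $|C| = |C^{\perp_\sigma}|$. By Proposition \ref{pro1}, $|C|\cdot|C^{\perp_\sigma}| = |R|^{p^s} = p^{2mp^s}$, so the equality of cardinalities is equivalent to $|C| = p^{mp^s}$. By Theorem \ref{th3}, a Type 4 code of this shape has $|C| = p^{m(2p^s - i - \omega)}$, so $|C| = p^{mp^s}$ if and only if $\omega + i = p^s$. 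This is the only arithmetic input we need.

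For the necessity direction, I would start from $C = C^{\perp_\sigma}$. Equality of sizes immediately forces $\omega + i = p^s$ by the calculation above. Next, self-duality implies self-orthogonality, so Theorem \ref{th13} applies and one of its two alternatives (a) or (b) must hold. Here comes the one careful observation: since $h(x)$ is a \emph{unit} written as $h(x) = \sum_{j=0}^{\omega - t -1} h_j(\gamma_0 x -1)^j$ with $h_0 \neq 0$, we must have $\omega - t - 1 \geq 0$, i.e., $\omega \geq t+1 > t$. Combined with $\omega + i = p^s$ this gives $p^s - i = \omega > t$, so $p^s > i + t$, ruling out alternative (a). Therefore alternative (b) of Theorem \ref{th13} must hold, yielding precisely the divisibility $(\gamma_0 x - 1)^{p^s - i - t} \mid [h(x) - h'(x)]$.

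For the sufficiency direction, assume $\omega + i = p^s$ together with the divisibility condition. The size equality $|C| = |C^{\perp_\sigma}| = p^{mp^s}$ follows from Theorem \ref{th3} as above. To get $C \subseteq C^{\perp_\sigma}$ I again invoke Theorem \ref{th13}: since $h$ being a unit forces $\omega > t$, the hypothesis $\omega + i = p^s$ gives $i + t < p^s = i + \omega$, which is exactly alternative (b) of Theorem \ref{th13} (with the upper bound attained), so the given divisibility is the hypothesis of (b) and Theorem \ref{th13} yields $\sigma$-self-orthogonality. Combining $C \subseteq C^{\perp_\sigma}$ with equal cardinalities gives $C = C^{\perp_\sigma}$.

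The only nontrivial step is the one highlighted in the necessity argument: observing that $h$ being a unit forces $\omega > t$, which is what lets us discard alternative (a) of Theorem \ref{th13} and conclude that the divisibility in alternative (b) is forced. Everything else is a direct bookkeeping combination of Proposition \ref{pro1}, Theorem \ref{th3}, and Theorem \ref{th13}.
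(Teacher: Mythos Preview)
Your proof is correct and follows the same overall strategy as the paper: combine the cardinality formula of Theorem~\ref{th3} (together with Proposition~\ref{pro1}) to pin down $\omega+i=p^s$, and invoke Theorem~\ref{th13} for the $\sigma$-self-orthogonality part. Two small differences are worth noting. In the necessity direction you are actually more careful than the paper: you explicitly use that $h$ is a unit (hence $\omega>t$) to rule out alternative~(a) of Theorem~\ref{th13}, whereas the paper jumps straight to the divisibility without commenting on why case~(a) cannot occur. In the sufficiency direction the paper takes a slightly different route: after establishing $C\subseteq C^{\perp_\sigma}$ via Theorem~\ref{th13}, instead of comparing cardinalities it verifies the reverse inclusion $C^{\perp_\sigma}\subseteq C$ directly by checking that the generators of $C^{\perp_\sigma}$ (as listed in Theorem~\ref{th4}) lie in $C$. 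Your cardinality argument is more economical and avoids that explicit computation; both are valid.
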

\begin{proof}
Necessity. If $C$ is $\sigma$-self-dual, i.e., $C = C^{\perp_\sigma} $, then $|\,C\,|=|\,C^{\perp_\sigma}\,|$. It follows from Theorem \ref{th3} that $|\,C\,|=p^{m(2p^s-i-\omega )}$, $|\,C^{\perp_\sigma}\,|=p^{m(i+\omega )}$. So $p^{m(2p^s-i-\omega )}=p^{m(i+\omega )}$, which implies that $p^s=i+\omega $. Since $C$ is $\sigma$-self-orthogonal, we have $(\gamma_0x-1)^{p^s-i}|h(x)-h^{'}(x)$.

Sufficiency. If $p^s=i+\omega $ and $(\gamma_0x-1)^{p^s-i-t}\,|\,(h(x)-h^{'}(x))$, we have $C$ is $\sigma$-self-orthogonal by Theorem 4.9
 and $$C=\left \langle \,(\gamma_0x-1)^i+u(\gamma_0x-1)^th(x),  u(\gamma_0x-1)^\omega \,\right \rangle\subseteq \mathcal{R}_{\gamma}, $$
$$C^{\perp_\sigma}=\left \langle \,(\gamma_0x-1)^i+u(\gamma_0x-1)^th^{'}(x),  u(\gamma_0x-1)^\omega\, \right \rangle\subseteq \mathcal{R}_{\gamma}.$$
$(\gamma_0x-1)^{p^s-i-t}\,|\,(h(x)-h^{'}(x))$ means that there exists $m(x)\in \mathbb{F}_{p^m}[x]$ such that
$$h(x)-h^{'}(x)=(\gamma_0x-1)^{p^s-i-t}m(x).$$
Hence, $$(\gamma_0x-1)^i+u(\gamma_0x-1)^th(x)=(\gamma_0x-1)^i+u(\gamma_0x-1)^th^{'}(x)+u(\gamma_0x-1)^\omega m(x), $$
Therefore, $$(\gamma_0x-1)^i+u(\gamma_0x-1)^th^{'}(x)=(\gamma_0x-1)^i+u(\gamma_0x-1)^th(x)-u(\gamma_0x-1)^\omega m(x), $$
which yields that $(\gamma_0x-1)^i+u(\gamma_0x-1)^th^{'}(x)\in C\,.$ Moreover, $u(\gamma_0x-1)^\omega\in C $, we have $C^{\perp_\sigma}\subseteq C\, $. Since $C$ is $\sigma$-self-orthogonal, i.e., $C\subseteq C^{\perp_\sigma}\, $. $C = C^{\perp_\sigma}$, which implies that $C$ is $\sigma$-self-dual.
\end{proof}

\begin{remark}
Particularly, if we take $\sigma=1$, then $\sigma$-self-orthogonal code is the usual self-orthogonal code. Then we can obtain the $\sigma$-self-orthogonality of the constacyclic codes from the above results.
\end{remark}

\begin{example}\label{eg1}
Let $\mathbb F_{3}=\{0,1,-1\}$, $\mathbb F_{3^2}=\frac{\mathbb F_{3}[x]}{\langle x^2+1\rangle}\cong\{0,1,-1,\omega,-\omega,1+\omega,-1+\omega,1-\omega,-1-\omega\}$, where $\omega^2+1=0$ and $R=\mathbb F_{3^2}+u\mathbb F_{3^2}$. We consider the code $C$ of length $3^2=9$ over $R$, whose generator matrix is
$$G_1=\left(
  \begin{array}{ccccccccc}
    1 & 0 & 1 & \omega & 0 & \omega&-1&0&-1 \\
    0&1&\omega&0&\omega&-1&0&-1&-\omega\\
    0&0&u&0&0&u\omega&0&0&-u\\
    0&0&0&u&-u\omega&-u&-u\omega&-u&u\omega
  \end{array}
\right).$$
Then $C$ is an $\omega$-constacyclic code, $C=\left \langle (\omega x+1)^7, u(\omega x+1)^5\right \rangle \subseteq R[x] / \langle x^9-\omega \rangle$. Then the size of $C$ is $81^3$.
Let $\sigma$ be an automorphism of $R$ defined by $\sigma:a+ub\longmapsto a^3+ub^3$. It is easy to verify that $C$ is $\sigma$-self-orthogonal but not self-orthogonal or $\sigma$-self-dual.
\end{example}

\begin{example}\label{eg2}
Let $\mathbb F_{5}=\{0,1,2,3,4\}$, $\mathbb F_{5^2}=\frac{\mathbb F_{5}[x]}{\langle x^2+3\rangle}\cong\{a+\omega b|a,b\in \mathbb F_{5},\,\omega^2=2\}$ and $R=\mathbb F_{3^2}+u\mathbb F_{3^2}$. We consider the code $C$ of length $5$ over $R$, whose generator matrix is
$$G_2=\left(
  \begin{array}{ccccccccc}
    1 & 2+2\omega & 2+3\omega+u(2+3\omega) & 1+3u & 2+2\omega+u(2+2\omega) \\
    0 & u & u(4+4\omega)&u(1+4\omega)&4u
  \end{array}
\right).$$
Then $C$ is an $2+2\omega$-constacyclic code, $C=\left \langle ((2+2\omega) x-1)^4+u((2+2\omega) x-1)^2, u((2+2\omega)x+1)^3\right \rangle \subseteq R[x] / \langle x^5-(2+2\omega) \rangle$. Then the size of $C$ is $25^3$.
Let $\sigma$ be an automorphism of $R$ defined by $\sigma:a+ub\longmapsto a^5+ub^5$. It is easy to verify that $C$ is $\sigma$-self-orthogonal but not self-orthogonal or $\sigma$-self-dual.
\end{example}

\begin{example}\label{eg3}
Let $R=\mathbb F_{3^2}+u\mathbb F_{3^2}$ as Example \ref{eg1}. We consider the code $C$ of length $3^2=9$ over $R$, whose generator matrix is
$$G_3=\left(
  \begin{array}{ccccccccc}
    1&\omega&-1&-\omega&1&\omega&-1&-\omega&1 \\
  \end{array}
\right).$$
Then $C$ is an $\omega$-constacyclic code, $C=\left \langle \,(\omega x+1)^8\,\right \rangle \subseteq R[x] / \langle x^9-1 \rangle$. Since the size of $C$ is $|C|=81$, the Hamming distance of $C$ is $d=9$, the length of $C$ is $n=9$ and the cardinality of the code alphabet is $|R|=81$, we have $|C|=|R|^{n-d+1}$, which means that $C$ is a MDS code.
Let $\sigma$ be an automorphism of $R$ defined by $\sigma:a+ub\longmapsto a^3+ub^3$. It is easy to find that $C$ is $\sigma$-self-orthogonal but not self-orthogonal. Thus $C$ is a $\sigma$-self-orthogonal MDS $\omega$-constacyclic code of length $9$ over $\mathbb F_{3^2}+u\mathbb F_{3^2}$.

\end{example}

\begin{remark}
 The results about $\sigma$-self-orthogonality of constacyclic codes of length $p^s$ over $R$ can be extended to constacyclic codes of length $2p^s$. When we consider constacyclic codes of length $2p^s$, the situation of $\lambda$ is divided into three cases separately:

(a) $\lambda$ is a square unit of $\mathbb F_{p^m}+u\mathbb F_{p^m}$,

(b) $\lambda =\alpha +u\beta$ is not a square and $\alpha \, $, $\beta $ are nonzero elements of $\mathbb F_{p^m}$,

(c) $\lambda =\gamma $ is not a square and $\gamma\, $ is a nonzero element of $\mathbb F_{p^m}$.

When $\lambda=\alpha^2$ is a square(i.e., case (a)), by Chinese Remainder Theorem, it is easy to find that every $\sigma$-self-orthogonal ($\sigma$-self-dual) $\lambda$-constacyclic code $C$ of length $2p^s$ can be represented as a direct sum of a $\sigma$-self-orthogonal ($\sigma$-self-dual) $-\alpha$-constacyclic code $C_1$ and a $\sigma$-self-orthogonal ($\sigma$-self-dual) $\alpha$-constacyclic code $C_2$ of length $p^s$ over $R$.

When $\lambda$ is not a square(i.e., case (b,c)), \cite{r15} gave the structures of $\lambda$-constacyclic codes and the dual codes of length $2p^s$, which are similar with those of length $p^s$. Thus, we can use the similar method to get the $\sigma$-self-orthogonality of the $\lambda$-constacyclic codes of length $2p^s$.

\end{remark}

\section{Acknowledgments}

We sincerely thank Professor Jay A. Wood for his valuable suggestions and comments during his visit at Central China Normal University from April to May in this year.  We also thank Professor Yun Fan for his comments. This work was supported by the self-determined research funds of CCNU from the colleges's basic research and operation of MOE (Grant NO.CCNU18TS028).


\end{document}